\theoremstyle{definition}
\newtheorem{lemma}{Lemma}
\newtheorem{defini}{Definition}
\newtheorem{prop}{Proposition}
\begin{document}

\title{Measurement-induced topological entanglement transitions in symmetric random quantum circuits}
\author{Ali Lavasani}
\author{Yahya Alavirad}
\author{Maissam Barkeshli}
\affiliation{Department of Physics, Condensed Matter Theory Center, University of Maryland, College Park, Maryland 20742, USA}
\affiliation{Joint Quantum Institute, University of Maryland, College Park, Maryland 20742, USA}

\begin{abstract}
  Random quantum circuits, in which an array of qubits is subjected to a series of randomly-chosen unitary operations, have provided key insights into the dynamics of many-body quantum entanglement. Recent work showed that interleaving the unitary operations with single-qubit measurements can drive a transition between high- and low-entanglement phases. We study a class of symmetric random quantum circuits with two competing types of measurements in addition to unitary dynamics. We find a rich phase diagram involving robust symmetry-protected topological, trivial, and volume law entangled phases, where the transitions are hidden to expectation values of any operator and are only apparent by averaging the entanglement entropy over quantum trajectories. In the absence of unitary dynamics, we find a purely measurement-induced critical point, which maps exactly to two copies of a classical 2D percolation problem. Numerical simulations indicate this transition is a tricritical point that splits into two critical lines in the presence of arbitrarily sparse unitary dynamics with an intervening volume law entangled phase. Our results show that measurements alone are sufficient to induce criticality and logarithmic entanglement scaling, and arbitrarily sparse unitary dynamics can be sufficient to stabilize volume law entangled phases in the presence of rapid yet competing measurements.
\end{abstract}

\maketitle

\section{Introduction}

Generic  unitary dynamics drive quantum many-body systems into highly entangled states characterized by volume-law scaling of subsystem entanglement entropies. When this dynamics is intercepted by rapid local measurements, individual quantum trajectories are expected to collapse into low entanglement states characterized by area-law scaling of subsystem entanglement entropies. Recently, it was discovered that, at least in a class of models, these two phases are separated by a scale-invariant ``critical point" at a finite measurement rate~\cite{skinner2019measurement,li2018quantum,chan19}. Several aspects of this transition and its generalizations have been studied recently~\cite{li2019measurement,gullan19purify,gullans2019scalable,ludwing19tensor,altman19error,Szyniszewski19weak,tang2020dmrg,ludwig20replica,deluca19fermion,Vasseur20tensor,altman19phasetheory,qi20blackhole,jed20numeric,vicari20ising,ashvin20meanfield,nahum2020entanglement}.

In the limit of infinitely rapid local measurements, the state of the system crucially depends on the choice of measurement basis. Assuming one measures only commuting single-qubit operators, the wave-function collapses into an unentangled trivial product-state. However, if one chooses to measure a set of \textit{stabilizer} operators that stabilize a topological or a symmetry protected topological (SPT) wave-function, the resulting state, despite having area-law scaling of entanglement as well, would be \textit{topologically distinct} from the product state~\cite{hastings11topologicalorder,chen2013symmetry}.

\begin{figure}
  \includegraphics[width=\columnwidth]{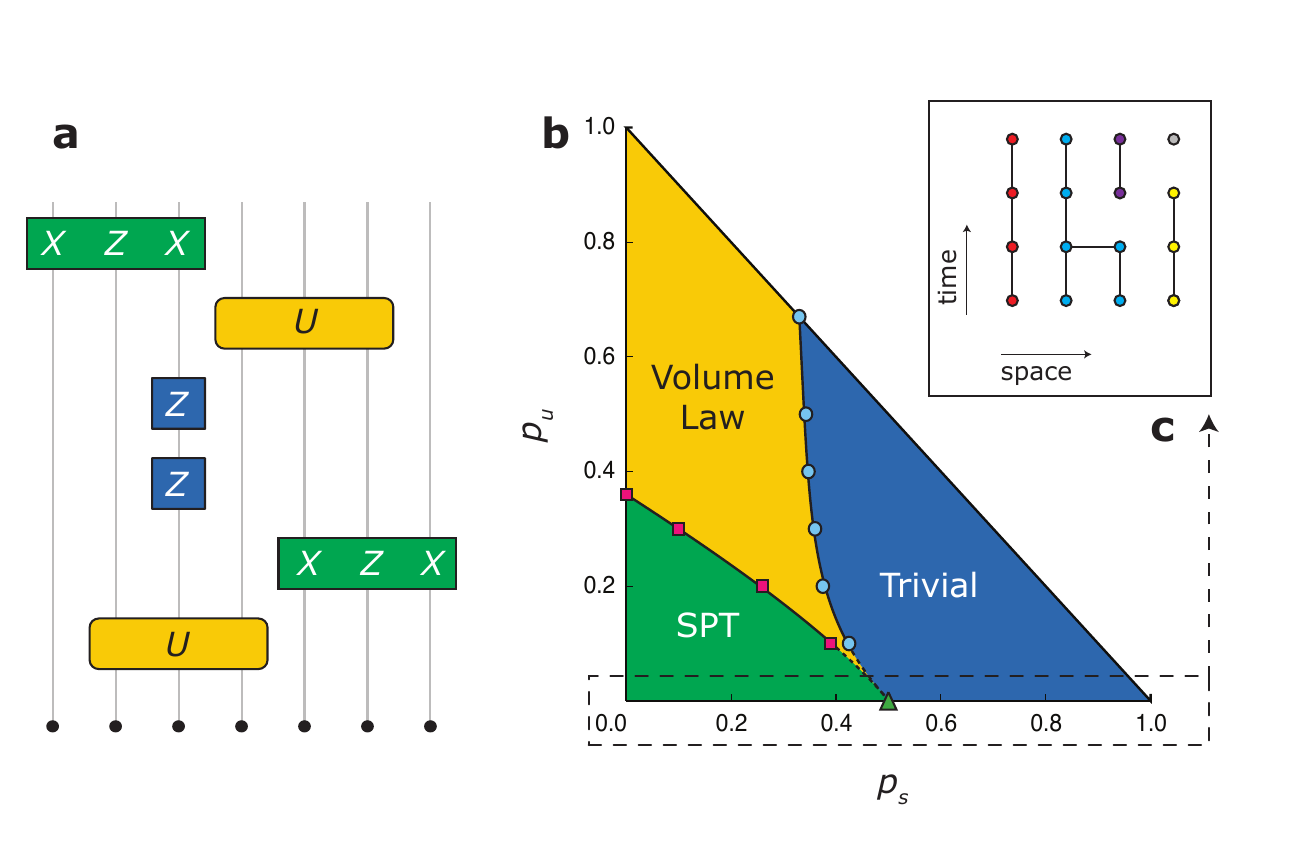}
  \caption{Schematic of the circuit and its corresponding phase diagram. \textbf{a}, Schematic diagram of a typical quantum circuit. Yellow (light) boxes corresponds to a three qubit random Clifford unitary, blue and green boxes represent projective measurements. \textbf{b}, The phase diagram describing the entanglement structure of the steady state. Red squares and blue circles are obtained from numerical simulations, while the rest of the phase boundaries are extrapolated. \textbf{c}, Mapping the dynamics of the random circuit on the $p_u=0$ axis to the $2$D percolation on a square lattice.}
  \label{fig_main}
\end{figure}

In this work, we consider the competition between these two types of measurements with each other as well as with the unitary dynamics. This raises the question of whether the notion of a topological phase is well-defined in random quantum circuits that include both unitary dynamics and local measurements.
To make progress in answering this question, we consider a (1+1)D quantum circuit model comprised of three
 elements: (a) Measurement of stabilizer operators that stabilize a $\mathbb{Z}_2\times\mathbb{Z}_2$ SPT realized by the ``cluster model"~\cite{raussendorf2001one,zeng2019quantum}; (b) Single-qubit measurements in the computational basis;
(c) Random, symmetry-allowed Clifford unitary gatesimulation possible~\cite{gottesman1997stabilizer})
. At each step of the circuit, one
element is selected at random with probability $p_t$, $p_s$, $p_u$ respectively ($p_t+p_s+p_u=1$) and applied at a random position in space. A typical snapshot of the circuit is shown in Fig. \ref{fig_main}a.

Using suitably defined order parameters, we discover a rich phase diagram, shown in Fig. \ref{fig_main}b. We find not only a stable SPT phase in an extended region of the phase diagram, but our results indicate a tricritical point, with logarithmic scaling of entanglement entropy, separating the volume law, trivial and SPT phases \textit{in the absence of unitary dynamics} $p_u=0$, i.e. when only measurements are present. The existence of this tricritical point implies that a volume-law phase can be stabilized by an infinitesimally small rate of unitary dynamics.

Moreover, we find an \textit{exact} analytical mapping that maps the case without unitary dynamics $p_u=0$
to two copies of a (non-standard) classical 2D percolation problem. Away from the $p_u= 0$ line, we extensively study the phase transitions numerically.
The numerical results are consistent with the correlation length critical exponent $\nu$ remaining the same on the phase boundaries all the way down to the tricritical point, which has $\nu=4/3$ based on the analytical mapping to percolation. On the other hand, we find that the coefficient of the logarithmic scaling of the entanglement entropy changes significantly, suggesting that the CFT description changes along the phase boundaries.

\section{Model}

We study a family of (1+1)D random quantum circuits that realize quantum trajectories extrapolating between wave functions in an SPT phase, a trivial product state, and a volume-law entangled phase.

We take our SPT to be the $\mathbb{Z}_2\times \mathbb{Z}_2$ symmetry protected phase realized by the cluster model defined on an open chain of $N$ qubits (we take $N$ even throughout) in (1+1)D~\cite{raussendorf2001one,zeng2019quantum},
\begin{equation}\label{eq_cluster_ham}
  H_0=- \sum_{i=2}^{N-1} X_{i-1}Z_iX_{i+1},
\end{equation}
where $X_i$ and $Z_i$ denote Pauli matrices.
Note that all terms commute with each other and therefore this model is exactly solvable.
This model realizes a SPT phase \cite{son2011quantum,santos2015rokhsar,tsui2017phase} protected by the $\mathbb{Z}_2\times \mathbb{Z}_2$ symmetry generated by
\begin{align}
  G_1=\prod_{i \text{ is even}}Z_i ~;~
  G_2=\prod_{i \text{ is odd}}Z_i.\label{eq_symmetry_generator}
\end{align}
 We say an eigenstate of $H_0$ is a symmetry invariant eigenstate if it is an eigenstate of all terms in $H_0$ as well as $G_1$ and $G_2$.
 All symmetry invariant eigenstates within the same symmetry sector can be related to each other by a symmetry-preserving constant depth local unitary circuit.

\begin{figure}
  \includegraphics[width=\columnwidth]{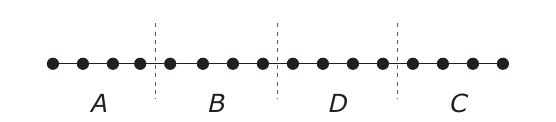}
  \caption{The $1$D chain cutting used to define the generalized topological entanglement entropy.}
  \label{fig:cutting}
\end{figure}

On an open chain, a particular generalization of entanglement entropy~\cite{zeng2015gapped,zeng2016topological,zeng2019quantum,fromholz2020entanglement} can be used as an order parameter for this SPT phase. Consider dividing the system as shown in Fig.~\ref{fig:cutting}.
The generalized topological entanglement entropy $S_\text{topo}$ is defined as
\begin{equation}\label{eq:s_topo_defenition}
  S_\text{topo} \equiv S_{AB}+S_{BC}-S_B-S_{ABC}.
\end{equation}
$S_{AB}$ stands for the von Neumann entanglement entropy of the region $A\cup B$ in the chain. Other terms are defined similarly. One can show that for all symmetry invariant eigenstates of $H_0$, $S_{\text{topo}} =2$.

To realize a wave function in this SPT phase, that is, a symmetry invariant eigenstate of $H_0$, we can for example use a quantum circuit that starts with an arbitrary eigenstate of $G_1$ and $G_2$ and then proceed to measure all stabilizer operators $g_i\equiv X_{i-1}Z_iX_{i+1}$.

To realize wave functions in the trivial phase, we use a quantum circuit that measures all single qubit operators in the $Z_i$ basis. The choice of the single qubit measurement basis $Z_i$ is fixed by demanding all measurement operators commute with the symmetry generators $G_1$ and $G_2$ (see Supplementary Section 1 for the case with symmetry violating measurements). All wave functions in the trivial phase have $S_\text{topo}=0$.

To realize wave functions in the volume law phase, we use random Clifford unitary gates that are allowed by the symmetry. The simplest class of gates to consider would be two qubit nearest-neighbor random unitaries. However, due to the symmetry restrictions, this set is not effective in entangling the qubits.
Ergo, we work with three-qubit random unitary gates.

We are now in a position to construct our full quantum circuit model: We start with the $\ket{0}^{\otimes N}$ state.
In each updating step we either: (a) apply a random 3-qubit Clifford unitary between qubits $i-1$, $i$ and $i+1$ with probability $p_u$, for a random $i$ drawn from $2,\cdots,N-1$, (b) measure the single qubit operator $Z_i$ with probability $p_s$, for a random $i$ drawn from $1,\cdots,N$, or (c) measure the stabilizer $g_i\equiv X_{i-1}Z_iX_{i+1}$ with probability $p_t=1-p_s-p_u$, for a random $i$ drawn from $2,\cdots,N-1$. A time step is defined as $N$ consecutive updating steps.

In the limiting case $p_u=1$ and $p_s=0$, the random unitary circuit drives the system into a volume law phase, whereas for the other two limiting cases, i.e. $p_u=0, p_s=0$ and $p_u=0,p_s=1$, the system is in an area law phase, one with SPT order and the other without.

We detect the presence of the different phases in several distinct ways. First, at each time step we calculate $S_{\text{topo}}$, averaged over quantum trajectories, and run the circuit until a steady state value is obtained. In addition to $S_{\text{topo}}$, to detect the phase transition from the area to volume law phase we extensively use the order parameter originally introduced in Ref.~\cite{gullans2019scalable}. To do so, first we run the circuit for time $2N$ to reach the steady state. Then, we entangle an ancilla qubit to the two qubits in the middle of the chain by measuring the following stabilizers,
\begin{equation}\label{eq_anc_stb}
  Z_{N/2-1} Z_a,\quad Z_{N/2+1} Z_a,\quad X_{N/2-1} X_a X_{N/2+1},
\end{equation}
where $X_a$ and $Z_a$ act on the ancilla qubit. Note that all three stabilizers commute with the symmetry generators $G_1$ and $G_2$. Next, we let the circuit run for an extra $\mathcal{O}(N)$ time steps, and then measure the entanglement entropy of the ancilla qubit. As shown in Ref.~\cite{gullans2019scalable}, if the system is in the area law phase, the ancillla's entanglement entropy $S_a$ should be zero by the time we measure it while in a volume law phase, the ancilla should be still entangled with the system.

We also use a slightly modified version of the ancilla order parameter\cite{gullans2019scalable}, which we call the scrambled ancilla order parameter denoted by $\tilde{S}_a$, such that instead of $1$ ancilla we use $10$ and instead of measuring the stabilizers listed in equation \eqref{eq_anc_stb} the ancillas are entangled to the system via $10$ time steps of a scrambling circuit, where at each updating step a random (non-symmetric) 3-qubit Clifford gate is applied to three randomly drawn qubits. As was the case for $S_a$, we measure the entropy of the ancilla subsystem after the qubit chain evolves $O(N)$ time steps under the symmetric random circuit. While in the trivial phase the ancilla subsystem would have been entirely disentangled from the qubit chain, giving $\tilde{S}_a=0$, in the SPT phase the ancilla subsystem should have remained entangled to the two edge degrees of freedom which are protected by the symmetry, resulting in $\tilde{S}_a=2$. In the volume law phase the ancilla subsytem should remain entangled to the bulk as well and hence $\tilde{S}_a>2$.

It turns out that compared to $\tilde{S}_a$ and $S_{topo}$, $S_a$ shows a sharper SPT to volume law phase transition when $p_u>0$ (see Supplementary Section 11) --and hence it is used to extract the corresponding critical exponents-- but is unable to detect the topological phase transition at $p_u=0$. On the other hand, $\tilde{S_a}$ can be used as an experimentally accessible probe to detect the phase transition at $p_u=0$.

We note that a type of Edwards-Anderson glass order parameter can also be used to detect the topological phase (see Supplementary Section 4), although it cannot distinguish the trivial and volume law phases.

Finally, we note that
the random quantum circuits studied here, viewed as a quantum channel, eventually transform the initial state of the system into the maximally mixed state allowed by the symmetry (see Supplementary Section 3 for a proof and a bound on how fast this happens). Therefore, the steady state expectation value of \textit{any} operator stays the same throughout the phase diagram and thus cannot serve as an order parameter.

\section{Mapping The Case Without Unitary Dynamics $p_u=0$ to Classical Percolation}

Here we show how to map the entire $p_u=0$ line in the random circuit presented above to two copies of a classical $2D$ percolation problem on a square lattice.
This percolation model is non-standard, although our numerical results indicate that it has the same critical properties as the standard classical percolation model on the square lattice. There is a distinct but closely related random quantum circuit that we define in Supplementary Section 9 which does map directly to (two copies of) standard classical percolation.

Let us divide the operators measured by the random circuit into two sets. One set, which we call the odd site operators, is comprised of single qubit operators $Z_i$ for odd $i$ alongside the stabilizers $g_j$ which \textit{end} on the odd sites, i.e. for even $j$.  The even site operators are defined analogously. Note that each member of one set commutes with all elements of the other set.

Let us focus on the measurements of odd site operators. Consider the $N/2~\times~M$ square lattice as shown in Fig. \ref{fig_main}c, where $M$ is the total number of updating steps in the circuit. We call this lattice the odd sites' percolation lattice.  The $N/2$ vertices on each row corresponds to the odd sites of the system and we label them accordingly. The vertical (horizontal) links ending (residing) on the $m$'th row are related to the $Z_i$ ($g_j$) measurements in the $m$'th step of the circuit in the following way: if $Z_i$ \textit{is not} measured at updating step $m$, we draw a vertical link between the $(i,m-1)$ and $(i,m)$ vertices. Also if the stabilizer $g_j$ \textit{is} measured at step $m$, we draw a horizontal link between the $(j-1,m)$ and $(j+1,m)$ vertices. At the end, we assign a unique color to each connected cluster of vertices. We construct the even sites' percolation lattice analogously. The randomness of the quantum circuit translates into random connections in the percolation lattices; the probability distributions for the links in the percolation lattice are detailed in Supplementary Section 5.

The entanglement structure of the system at step $M$ can be extracted from the colors of the vertices on the last row of the two aforementioned percolation lattices. As the following proposition makes precise, qubits of the same color make up their own SPT state:
\begin{prop}\label{prp_coloring}
  Group the qubits based on their color on the last row of the percolation lattice. Let $A^j=\{q_i\}_{i=1}^n$ denote the ordered set of qubit indices corresponding to $j$'th color; that is, the $q_i$ label a set of qubits all with the same color at step $M$. Then, up to a minus sign, the operators that stabilize the state of the system at step $M$ are of the following form,
  \begin{equation}
     \prod_{i=1}^n Z_{q_i}\quad \text{and}\quad g_{q_i,q_{i+1}}\quad \text{for }i=1,2,\cdots,n-1,
  \end{equation}
  where $g_{q_i,q_{i+1}}$ is defined as
  \begin{equation}
    g_{i,j}=X_{i}~\qty[\prod_{k=0}^{\frac{j-i}{2}-1}Z_{i+2k+1}]~X_{j}.
  \end{equation}
\end{prop}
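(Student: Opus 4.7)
The plan is to proceed by induction on the measurement step $m$, exploiting that every odd-site operator ($Z_i$ for odd $i$ and $g_j$ for even $j$) commutes with every even-site operator. Consequently the stabilizer group of the state splits at all times into an odd sector and an even sector that evolve independently, so it is enough to prove the proposition for the odd-site percolation lattice; the even-site case is identical. The base case $m=0$ is immediate: $\ket{0}^{\otimes N}$ has $\{Z_i : i \text{ odd}\}$ as its odd stabilizers, and every odd vertex at row $0$ is its own singleton cluster, contributing exactly one $Z_{q_1}$ apiece.

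For the inductive step I would distinguish three possibilities at step $m$: (a) an even-site operator is measured; (b) $Z_i$ is measured for odd $i$; (c) $g_j$ is measured for even $j$. Case (a) leaves both the odd sector and the odd clustering untouched. In case (b), let $A^c = \{q_1 < \cdots < q_n\}$ be the cluster containing $i = q_l$; the missing vertical link at $i$ splits it in the percolation into $\{q_l\}$ and $\{q_1,\ldots,q_{l-1},q_{l+1},\ldots,q_n\}$. The operator $Z_i$ commutes with $S_c = \prod_k Z_{q_k}$ and with every edge $g_{q_k,q_{k+1}}$ except the (at most two) edges incident to $q_l$. Applying the Clifford measurement-update rule, and using the identity $g_{q_{l-1},q_l}\, g_{q_l,q_{l+1}} = g_{q_{l-1},q_{l+1}}$ to absorb one anticommuting edge into the other, converts the old generators into exactly the stabilizers the proposition assigns to the two new clusters (with obvious simplifications when $l \in \{1,n\}$ or $n = 1$).

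The more interesting case is (c). Set $u = j-1$ and $v = j+1$; these are adjacent odd sites with no odd site strictly between them. If $u$ and $v$ are already in the same cluster $A^c$, they must be consecutive in the ordered $A^c$, so $g_j = g_{u,v}$ is already an edge-stabilizer and the measurement is trivial; the new horizontal link only duplicates existing connectivity. If instead $u \in A^c$ and $v \in A^{c'}$ with $c \neq c'$, a site-by-site check---using that every edge $g_{q_k,q_{k+1}}$ carries $X$'s only on odd sites and $Z$'s only on even sites, while $g_{u,v}$ carries $X$'s on odd $u,v$ and a single $Z$ on even $j$, so all possible site-wise overlaps are $X$-$X$ or $Z$-$Z$---shows that $g_{u,v}$ anticommutes only with $S_c$ and $S_{c'}$. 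The update rule replaces $S_{c'}$ by $S_c\, S_{c'} = \prod_{q \in A^c \cup A^{c'}} Z_q$, removes $S_c$, and adds $\pm g_{u,v}$. The final step is to verify that the resulting generators agree, as a group, with the consecutive-pair edges plus $\prod Z_q$ of the position-sorted merger $\tilde A = A^c \cup A^{c'}$. This follows from the universal composition identity $g_{a,b}\, g_{b,c} = g_{a,c}$, valid for any ordering of $a,b,c$ since the two $X_b$'s cancel and the overlapping $Z$-chains cancel pairwise.

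The main obstacle is the interleaved variant of case (c), where $A^c$ and $A^{c'}$ alternate in position so that most consecutive pairs of $\tilde A$ are cross-cluster pairs other than $(u,v)$; then one cannot match old and new generators edge by edge. The resolution is group-theoretic: the old edges of $A^c$ and $A^{c'}$ together with $g_{u,v}$ define a connected auxiliary graph on $\tilde A$, and iterating the composition identity along any path in this graph produces every $g_{p,p'}$ with $p,p' \in \tilde A$---in particular every consecutive-pair edge of $\tilde A$---while the converse direction is identical, establishing equality of the two stabilizer subgroups and closing the induction.
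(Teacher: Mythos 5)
Your proof is correct and follows essentially the same route as the paper's: an induction over updating steps using the odd/even decoupling, the description of the state as a disjoint union of ``isolated SPT'' clusters (the paper's Lemma~\ref{lm_isolated_SPT}), a case analysis on $Z_i$ versus $g_j$ measurements with the standard stabilizer update rule, and the identification of cluster splitting/merging with the percolation coloring. Your explicit auxiliary-graph argument via the composition identity $g_{a,b}\,g_{b,c}=g_{a,c}$ simply spells out the interleaved-merge step that the paper dispatches with ``straightforward to verify,'' relying on its remark that the edge stabilizers of $\EuScript{S}(A)$ generate all $g$-strings between points of $A$.
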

By considering similarly defined stabilizer operators for all different colors ($A^j$'s with different $j$), we get a complete set of stabilizers that specify the state of the system.
The proof of Proposition \ref{prp_coloring} is left for Supplementary Section 8.

As shown in Lemma 1 in Supplementary Section 2, the minus sign ambiguity in Proposition \ref{prp_coloring} has no bearing on the entanglement spectrum of the system's state. Thus the percolation lattices exactly determine the (von Neumann or R\`enyi) entanglement entropy for any subset of qubits.

\section{Numerical Results}

We start by briefly reviewing the quantities we numerically calculate to obtain the phase diagram and to characterize the critical phase boundaries.

A signature of criticality in (1+1)D systems is the logarithmic scaling of the entanglement entropy. Thus, we calculate the entanglement entropy at the $t$th time step (which corresponds to $tN$ updating steps), $S(x,L;t)$ of a subsystem of length $x$ for a system of total length $L = N$, averaged over all of the quantum trajectories of the circuit.

In the large time limit, this averaged entanglement entropy saturates to a logarithmic form at the phase transitions as in (1+1)D CFTs~\cite{calabrese2009entanglement}:
\begin{equation}\label{eq_ee_log}
  S(x,L)=a_x \log(\frac{L}{\pi}\sin \frac{\pi x}{L})+b.
\end{equation}
We can also characterize the entanglement growth with time. At criticality, for timescales much smaller than the saturation time we have,
\begin{equation}\label{eq_dynamical_z_in_s_t}
  S(x,L;t)={a_t} \log(t)+b'.
\end{equation}
Note that as opposed to unitary CFTs the coefficient of the logarithmic scaling $a_x$ is not given by the central charge of any underlying CFT. In the context of the area law to volume law transition, Ref.~\cite{fisherludwig20} provides an appealing interpretation of $a_x$ and $a_t$ as universal quantities given by the scaling dimension of certain ``boundary condition changing" operators. $b$ and $b'$ are non-universal constants

Throughout the phase boundaries, we find $a_x=a_t$ within the margin of error, which is consistent with a dynamical exponent $z=1$, as the entanglement growth rate is similar along time and space directions.

We can use the averaged topological entanglement entropy, $S_\text{\text{topo}}$ as the order parameter to distinguish the three different phases: $S_\text{topo}$ would be extensive in the volume law phase, while in the thermodynamic limit it should converge to values $2$ and $0$ in the topological and trivial phases respectively. Let $S_{\text{topo}}(p,L)$ denote the steady state value of $S_\text{topo}$ when some tuning parameter (e.g. single qubit measurement probability) is $p$ and system size is $L$. On general grounds, we expect the following scaling form in the vicinity of the  critical point,
\begin{equation}\label{eq_finite_size_scaling}
  S_{topo}(p,L)=F((p-p_c)L^{1/\nu}),
\end{equation}
where $F(x)$ is some unknown function, $p_c$ is the critical value of tuning parameter $p$, and $\nu$ is the correlation length critical exponent,
$\xi\propto|p-p_c|^{-\nu}$.

As explained in the Model section, the entanglement entropy of a suitably entangled ancilla system, $S_a$ or $\tilde{S}_a$ can also be used as the order parameter to distinguish the volume law phase from the other two area law phases. Assuming the dynamical exponent $z=1$, for the ancilla entropy $S_{a}$ we have~\cite{gullans2019scalable},
\begin{equation}
  S_{a}(p,L,t)=G((p-p_c)L^{1/\nu},t/L),
\end{equation}
where $G(x)$ is some unknown function. $\tilde{S}_a$ has a similar scaling form.

\begin{figure}
  \includegraphics[width=\columnwidth]{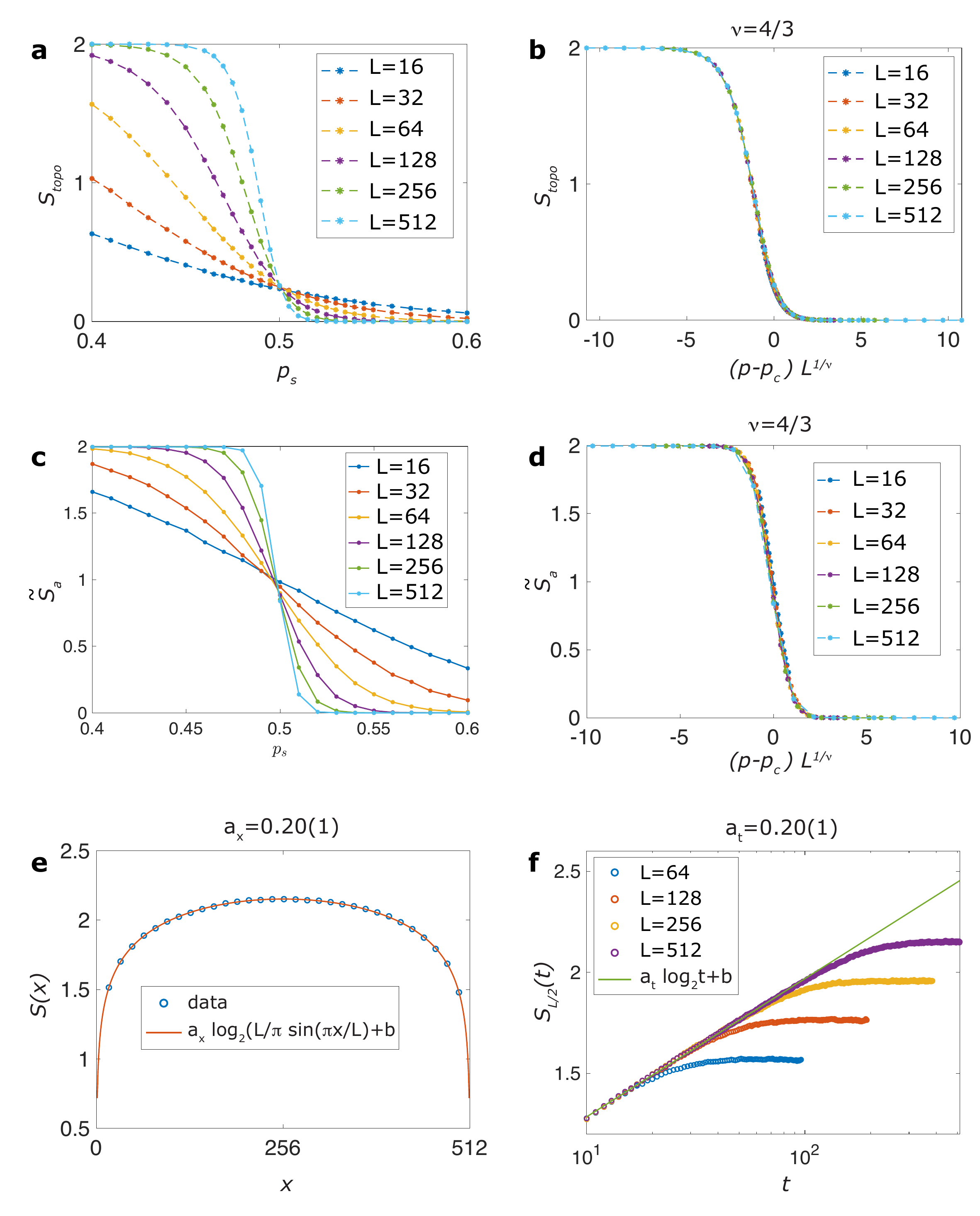}
  \caption{The numerical results pertaining to the tricritical point at $p_u=0$. \textbf{a}, Topological entanglement entropy $S_\text{topo}$ near the tricritical point versus single qubit measurement probability $p_s$. \textbf{b}, Scaling collapse of the data in panel a. \textbf{c}, Ancilla entropy $\tilde{S}_a$ measured $t=N$ time steps after scrambling. \textbf{d}, Scaling collapse of the data in panel c. \textbf{e}, The entanglement entropy of the $[0,x]$ segment of the chain, $S(x,L)$, at late times for $p=p_c$ and $L=512$. \textbf{f}, The entanglement entropy of the half-chain versus time for $p_s=p_c$. All entropies are in units of $\log 2$. See Supplementary Section 8 for an analytical derivation of the $a$ coefficient using the percolation map.}
  \label{fig_u0plots}
\end{figure}

We now present our numerical results.  We study system sizes up to 512 qubits and average over $10^5$ random quantum trajectories. We start with the $\ket{0}^{\otimes N}$ state and let the circuit run for $2N$ time steps for the system to reach the steady state. We have explicitly verified that saturation is reached before $t=2N$. After entangling the ancilla qubit, we simulate the system for an additional $\mathcal{O}(N)$ time steps to calculate $S_{a}$ (as explained above).

 Fig. \ref{fig_u0plots} shows numerical results along the $p_u=0$ line. Fig. \ref{fig_u0plots}a and c show the steady state value of $S_\text{topo}$ and $\tilde{S}_a$ versus $p_s$ for different system sizes. As is evident from both diagrams, there is a clear continuous phase transition at $p_c=1/2$ in the thermodynamic limit. A simple argument based on duality shows that if there is a continuous phase transition between the trivial and topological phase, it has to be at $p_s =1/2$. This duality argument is provided in Supplementary Section 6. Interestingly we find that $S_a$ seems to be unable to capture the area-law to area-law phase transition at $p_u=0$, at least for numerically accessible systems sizes. On the other hand,  From collapsing the data near the critical point $p_c=1/2$, we find $\nu=4/3$ results in a near perfect collapse (see Fig. \ref{fig_u0plots}c and d).

Fig.~\ref{fig_u0plots}e shows the steady state value of entanglement entropy $S(x)$ of the subregion $[1,x]$ at the critical point $p_u=0$ and $p_s=1/2$, for $L=512$. As shown, the entanglement entropy fits the CFT form of equation  \eqref{eq_ee_log} with $a_x=0.20(1)$.

Fig. \ref{fig_u0plots}f shows the entanglement entropy of the half chain versus time at $p=p_c$ for different chain sizes. The entanglement entropy grows logarithmically with time, until the finite size effects show up. By comparing the corresponding fitted analytical expressions we find $a_t=a_x=0.20(1)$.

We note that the values of $\nu$, $z$, $a_x$, and $a_t$ at the $p_u = 0$, $p_s = 1/2$ transition agree with the values of our other random measurement-based quantum circuit model presented in Supplementary Section 9, which in turn maps to (two copies of) the standard classical link percolation problem on the square lattice. Our results are thus consistent with the $p_u = 0$, $p_s = 1/2$ transition studied in Fig. \ref{fig_u0plots} being governed by (two copies of) the standard classical percolation fixed point.

\begin{figure*}
  \includegraphics[width=\textwidth]{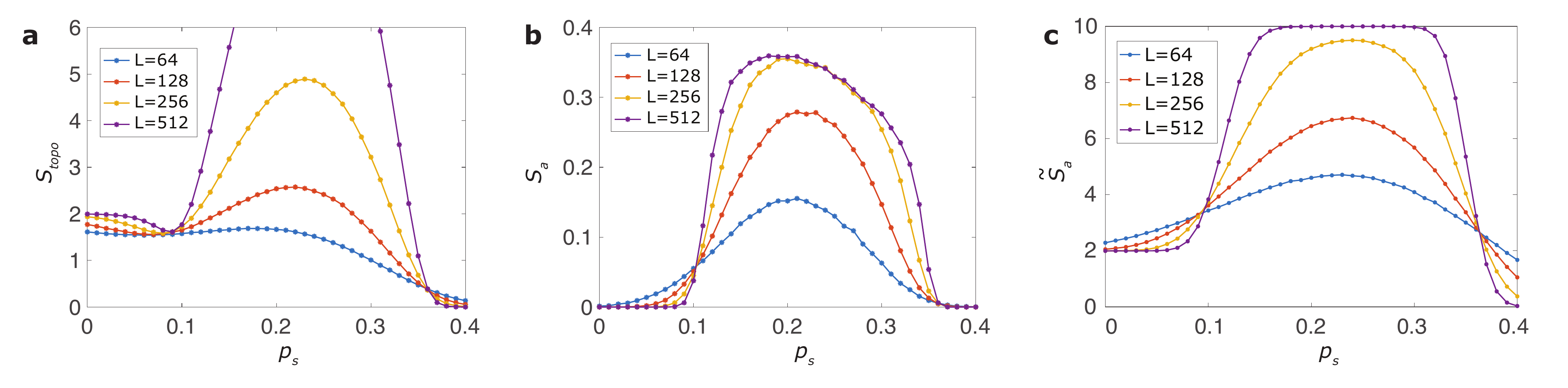}
  \caption{The phase transitions across the $p_u=0.3$ line. \textbf{a}, $S_\text{topo}$ verses $p_s$. \textbf{b}, The ancilla entropy $S_a$ measured $t=N$ time steps after it was entangled, versus $p_s$. \textbf{c}, The ancilla entropy $S_a$ measured $t=N$ time steps after it was entangled, versus $p_s$. In all three panels, the first crossing corresponds to the phase transition from the SPT phase into the volume law phase while the second crossing corresponds to the phase transition from the volume law phase to the trivial phase. The critical points are marked on the phase diagram in Fig.\ref{fig_main}b as well.}
  \label{fig_s_t_s_a}
\end{figure*}

We now proceed to the case with unitary dynamics $p_u\neq0$. Fig. \ref{fig_s_t_s_a} shows $S_\text{topo}$, $S_a$ and $\tilde{S}_a$ versus $p_s$ for the fixed value of $p_u=0.3$. For $p_s=0$, the system is in the topological phase as can be seen from Fig. \ref{fig_s_t_s_a}a. By increasing $p_s$, the entropies exhibit a continuous phase transition to the volume law phase at first and then another continuous phase transition to the trivial phase.

By using analogous plots for different values of $p_u$, we can determine the $2$D phase diagram in the $(p_s,p_u)$ space. The result is illustrated in Fig. \ref{fig_main}b. Note that since the probability of measuring a stabilizer is $1-p_u-p_s$, the phase diagram is restricted to the region $p_u+p_s \le 1$. The data points on the plot have been extracted using numerical simulations and then the schematic phase diagram is drawn based on them. For more detailed results used in obtaining the phase diagram see Supplementary Section 11.

The SPT/volume law phase boundary intersects the $p_u$ axis at $p_u=0.355(3)$ and the volume law/trivial phase boundary ends at $p_u=0.663(4)$ on the $p_u+p_s=1$ line. Our numerical simulations demonstrate that the volume law phase still exists for $p_u$ as low as $0.1$. Unfortunately, clearly detecting the SPT to volume law transition requires increasingly large system sizes as $p_u$ is lowered (See Supplementary Section 10 for details). Therefore, we extrapolate the phase diagram for smaller values of $p_u$. By following the trend of the data points, it appears that the volume law phase survives all the way down to $p_u=0$, hence suggesting that the critical point at $p_s=0.5$ and $p_u=0$ is actually a tricritical point. This in turn means that at $p_s = 1/2$, arbitrarily sparse random Clifford gates in the quantum circuit can still drive the system into the volume law phase.

By using the scaling form in equation  \eqref{eq_finite_size_scaling} and collapsing the data, we can extract the correlation length critical exponent $\nu$ along the phase boundaries. Taking into account the margins of error, our numerical results are consistent with $\nu = 4/3$ everywhere along the phase boundaries (see Supplementary Section 11 for the corresponding plots and numerical values). However $a_x = a_t$ changes significantly along the phase boundaries at the largest system sizes we have studied. If the $a_x = a_t$ that we
extract are indeed close to their values in the thermodynamic limit, this suggests that the volume to area law critical lines may be related to two copies of the classical percolation fixed point by marginal deformations.

%

Entanglement phase transitions involving topological or SPT phases, also seem to be closely related to quantum error correction. In particular, the rapid stabilizer measurements are reminiscent of syndrome measurements in active error correction schemes. Moreover, random single qubit measurements can be viewed as faulty syndrome measurements or qubit decoherence, while unitary dynamics models the random noise affecting the qubits. In this context, ``entanglement phase transitions" could be related to ``error-thresholds" beyond which the long range entanglement structure of the code space, which is responsible for the topological protection of the encoded information, is entirely lost, hence rendering recovery of logical information impossible. Within this framework, our results might have natural applications to quantum error correcting codes.
Note that this is a different analogy to quantum error correction than the one presented in Ref.~\cite{altman19error,gullan19purify}, where the volume law phase is considered to be a quantum error correcting code.

\section{Acknowledgements}

We thank M. Hafezi, H. Dehghani, and A. Nahum for helpful comments. We are especially grateful to M. Gullans and D. Huse for suggesting the modified ancilla order parameter and discussions regarding its saturation value in the topological phase. The authors acknowledge the University of Maryland supercomputing resources (http://hpcc.umd.edu) made available for conducting the research reported in this paper. A.L and M.B are supported by NSF CAREER (DMR- 1753240), Alfred P. Sloan Research Fellowship, and JQI- PFC-UMD. Y.A is supported by National Science Foundation NSF DMR1555135 and JQI-NSF-PFC.

\bibstyle{naturemag}
\bibliography{mybib}

\pagebreak
\clearpage

  \begin{center}
  \textbf{\large Supplementary Materials}
  \end{center}

\setcounter{equation}{0}
\setcounter{figure}{0}
\setcounter{page}{1}
\setcounter{section}{0}
\renewcommand{\theequation}{S\arabic{equation}}
\renewcommand{\thefigure}{S\arabic{figure}}
\renewcommand{\thesection}{\arabic{section}}

\section{Methods}\label{methods}
\subsection{Binary Representation of the Stabillizer Circuit}

We use the binary representation of the stabilizer formalism to simulate the Clifford circuits studied in this work. This representation is based on the observation that, up to some phase factor, any Pauli string operator $s$ over $N$ qubits can be uniquely mapped to a binary vector $ w=( u, v) \in \mathbb{Z}_2^{2N}$ where $ u,  v \in \mathbb{Z}_2^N$ and
\begin{equation}
  s=e^{i\theta}~\prod_{i=1}^N X_i^{u_i}~\prod_{i=1}^N Z_i^{v_i}
\end{equation}
If Pauli string operators $s_1$ and $s_2$ correspond to vectors ${w}_1$ and ${w}_2$, their multiplication $s_1 s_2$ corresponds to ${w}_1+{w}_2$. Moreover, $[s_1,s_2]=0$ if and only if ${w_1}^T  g {w}_2=0$, where $ g$ is the $2N \times 2N$ matrix defined as
\begin{equation}\label{eq_inner_product}
 g =\left(
\begin{array}{c|c}
~0_{N\times N}~& \mathbbm{1}_{N\times N} \\ \hline
\mathbbm{1}_{N\times N} & 0_{N\times N} \\
\end{array}
\right)
.
\end{equation}
It is also easy to apply Clifford unitaries in the binary representation. Let $U$ be a unitary in the Clifford group. Since $U$ belongs to the Clifford group, the images of $X_i$ and $Z_i$ under $U$, i.e. $U X_i U^\dagger$ and $U Z_i U^\dagger$ are themselves Pauli string operators  and have binary representations in $\mathbb{Z}_2^{2N}$. Let $M_U$ be the $2N \times 2N$ matrix whose first and second $N$ columns corresponds to the images of $X_i$s and $Z_i$s under $U$ respectively, for $i=1,\cdots,N$. It is easy to see that, if $ w$ is the binary representation of a Pauli string $s$, the binary representation of $U s U^\dagger$ would be given by the matrix multiplication $M_U  w$ in $\mathbb{Z}_2$.

Given a stabilizer set $\mathcal{S}$,  a $N \times 2N$ stabilizer matrix $M_{\mathcal{S}}$ can be formed  by taking the binary representation of the elements of $\mathcal{S}$ as its rows. For example, the stabilizer matrix which corresponds to the state $\ket{0}^{\otimes N}$ is given as
\begin{equation}
  M_\mathcal{S}=(0_{N\times N} | \mathbbm{1}_{N \times N})
\end{equation}
 One may keep track of the phase factors using an additional $N$ element vector, but since we are interested in the entanglement structure which is independent of the phase factors (see Lemma.\ref{lm_indistinguishable}), we ignore the phase factor in what follows.

In our numerics we use the stabilizer matrix of the system to keep track of the entanglement dynamics of the system. As was discussed above, applying a unitary $U$ would transform $M_\mathcal{S}$ to $M_\mathcal{S}M_U^T$ where $M_U$ is the binary representation of $U$ and $T$ stands for transpose (note that the stabilizers are stored as the rows of $M_\mathcal{S}$ rather than its columns). It is straightforward to keep track of the Clifford measurements as well, due to the Gottesman-Knill theorem\cite{gottesman1998heisenberg,nielsen2002quantum}.
 Let $s_\ast $ represent the Pauli string operator that is being measured. First we find the stabilizers in $\mathcal{S}$ that do not commute with $s_m$, which can be done efficiently by computing $M_\mathcal{S} g s_\ast$ with $g$ as defined in equation \eqref{eq_inner_product}. If $s_\ast$ commutes with every stabilizers in $\mathcal{S}$ then measuring it has no effect on the state of the system. On the other hand, if $s_\ast$ doesn't commute with some stabilizers in $\mathcal{S}$, say $s_1,\cdots,s_m $,
the stabilizer set of the system after measuring $s_\ast$ can be obtained by replacing $s_1$ with $\pm s_\ast$ and $s_i$ with $s_1 s_i $ for $i=2,\cdots,m$, where the $\pm$ sign is chosen at random. Since we are ignoring the phase factors in the binary representation, this amounts to replacing the row corresponding to $s_1$ with the binary representation of $s_\ast$ and adding the binary representation of $s_1$ to the rows corresponding to $s_2\cdots s_m$.

To sample the three-qubit symmetric Clifford unitary set, we use the procedure outlined in Ref.\cite{li2019measurement} to generate all possible binary representations of three qubit Clifford unitaries and then choose the symmetric subset by explicitly checking whether a unitary respects the $\mathbb{Z}_2 \times \mathbb{Z}_2$ symmetry.

As the last remark in this section, we note that given the stabilizer matrix $M_\mathcal{S}$, the entanglement entropy of a subset  $A$ of the qubits can obtained via\cite{nahum2017quantum}:
\begin{equation}
  S_A=\text{rank}(M_\mathcal{S}\big|_A)-n_A,
\end{equation}
where $M_\mathcal{S}\big|_A$ is the submatrix of $M_\mathcal{S}$ obtained via keeping only the columns which correspond to the qubits in $A$, $n_A$ is the number of qubits in $A$ and the rank is computed in $\mathbb{Z}_2$.

\subsection{Estimating the Errors}

In this section we briefly summarize the procedure that was used to estimate the numerical values of  parameters and their corresponding errors.

The critical value $p_c$ can be found by plotting the order parameter for different system sizes and locating the scale invariant point at which all the curves for  different system sizes cross. The reported value of $p_c$ corresponds to the crossing point of the order parameter curves for $L=512$ and $L=256$, while the curves for smaller system sizes are used to estimate the error. We find $p_c(L)$ for $L=512,256$ and $128$, where $p_c(L)$ is defined as the crossing point between curves of system sizes $L$ and $L/2$. The y-intercept of the linear fit to $p_c(L)$ as a function of $1/L$ gives an estimate for $p_c(L\to\infty)$. The error in $p_c$ is then estimated by the difference between the extrapolated value $p_c(L \to \infty)$ and its value at $L=128$. We use $S_a$ as the order parameter to detect the phase transition from SPT to volume law entangled phase, while we use $S_\text{topo}$ for the phase transition from the volume law entangled phase to the trivial phase.  The reason for using two different order parameters is that, while $S_\text{topo}$ has less noise than the ancilla order parameter, one has to go to larger system sizes to properly detect the SPT to volume law phase transition using this order parameter. On the other hand, although $S_a$ has to be averaged over higher number of realizations, it displays a sharper crossing compared to $S_\text{topo}$ or $\tilde{S}_a$ (See Supplementary Section 11).

The value of correlation length critical exponent $\nu$ is found from the data collapse. Assume a certain quantity, say $S_a$, has the following finite size scaling form:
\begin{equation}
  S_a(p,L)=F((p-p_c)L^{1/\nu})
\end{equation}
for some arbitrary function $F$. It follows that if one plots $S_a$ as a function of $(p-p_c)L^{1/\nu}$, for the right choice $\nu$, all the data points would collapse on the $y=F(x)$ curve. To find the best collapse, we use the objective function $\epsilon(\nu)$ defined as:
\begin{align}
  \epsilon(\nu)=\frac{1}{n-2}\sum_{i=2}^{n-1} \qty(y_i-\bar y_i)^2,
\end{align}
  where,
  \begin{align}
    \bar y_i&= \frac{(x_{i+1}-x_i)y_{i-1}-(x_{i-1}-x_i)y_{i+1}}{x_{i+1}-x_{i-1}},
  \end{align}
  with  $x_i=(p_i-p_c)L_i^{1/\nu}$ and $y_i=S_a(p_i,L_i)$. Here $i$ labels the $i$'th data point (including system sizes $L=64,128,256$ and $512$) sorted based on their $x$ value such that $x_1<x_2<\cdots<x_n$, and $n$ denotes the total number of data points. $\bar y_i$ is the expected value of $y_i$ if it was on the line passing through the two adjacent data points. For the perfect collapse and in the limit of infinitely close data points, $\epsilon(\nu)$ would vanish. To obtain the best collapse, we find the value $\nu^\ast$ which
   minimizes the objective function $\epsilon(\nu)$ for a given set of numerical data. To estimate the error, we find the $\nu$ interval for which  $\epsilon(\nu)<2~\epsilon(\nu^\ast)$.

   The numerical values of $a_t$ and $a_x$ (See equation \eqref{eq_ee_log} and equation \eqref{eq_dynamical_z_in_s_t}) are obtained by fitting the numerical data for $L=512$ to the analytical expressions via the method of least squares. Because these equations are field theory results and are valid only in length scales much larger than lattice spacing, we exclude data points corresponding to first and last $10$ sites before fitting the data to equation \eqref{eq_ee_log}. As for equation \eqref{eq_dynamical_z_in_s_t}, we exclude data points for $t<10$ as well as data points close to the saturation value of half chain entropy, $S_{L/2}(t)>0.9~S_{L/2}(\infty)$.
   Let $\varepsilon_1$ denote the error that characterizes the fit quality. There is also a systematic source of error related to finite size effects, which we denote by $\varepsilon_2$. To estimate $\varepsilon_2$, we evaluate $a_x(L)$ and $a_t(L)$ for $L=128,256,512$, find the $y$-intercept of the linear fit to  $a_x(L)$ (and $a_t(L)$) as a function of $1/L$ and then estimate $\varepsilon_2$ as the difference between the $y$-intercept and the parameters evaluated at $L=128$. The reported error is then $\max(\varepsilon_1,\varepsilon_2)$.

\section{Circuits which do not respect the symmetry}\label{apx_symmetry_violating}

To verify that the stability of the SPT phase in the circuit model of the main text depends on the circuit respecting the symmetry, we consider the same circuit model at $p_u=0$ but with $Z_i$ measurements replaced by $X_i$ measurements. Fig.\ref{fig_X_circuit} shows the topological entanglement entropy in the steady state versus $p_s$ for the aforementioned modified circuit. As expected, any infinitesimal $p_s$ will destroy the topological phase in the thermodynamic limit.

\begin{figure}[h]
  \includegraphics[width=.7 \columnwidth]{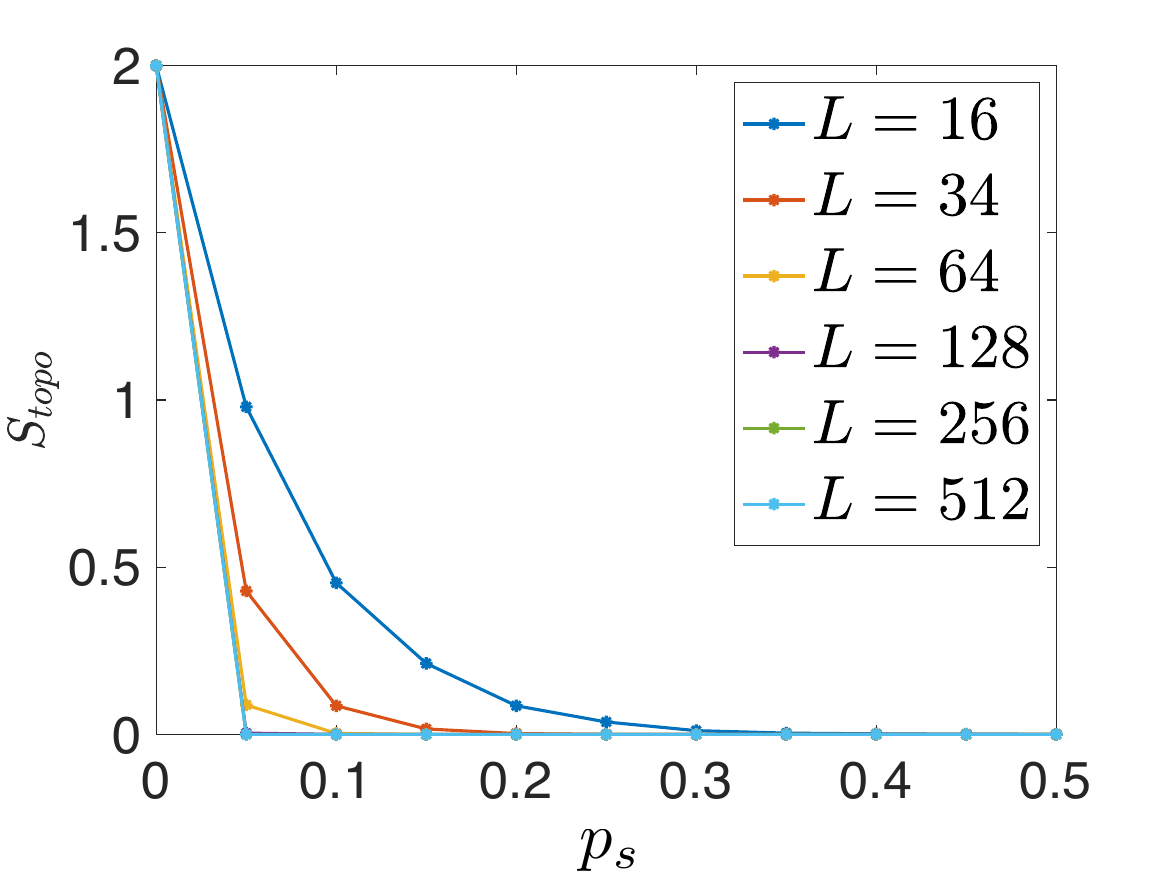}
  \caption{$S_\text{topo}$ versus $p_s$. The single qubit measurements are in the $X_i$ basis rather than the $Z_i$ basis. }
  \label{fig_X_circuit}
\end{figure}

\section{Entanglement entropy in stabilizer formalism}\label{apx_stabilizer_formalism}

 A stabilizer state $\ket{\psi}$ over $N$ qubits is specified by a set $\mathcal{S}=\{s_1,\cdots,s_N\}$ of $N$ independent, mutually commuting Pauli strings operators $s_i$, such that
 \begin{equation}
   s_i \ket{\psi}=\ket{\psi}.
 \end{equation}
 Clearly, there are many equivalent choices of $\mathcal{S}$ that result in the same stabilizer state $\ket{\psi}$. However, given a stabilizer set $\mathcal{S}$, the elements $s_i$ generate an abelian group $\mathcal{G}_\psi=\langle s_1,s_2,\cdots ,s_N\rangle$ under multiplication, which is determined uniquely by the stabilizer state $\ket{\psi}$.

 The density matrix of the system in the stabilizer state $\ket{\psi}$ can be written as \cite{fattal2004entanglement}
 \begin{equation}\label{eq_stab_density}
   \rho=\ketbra{\psi}{\psi}=\frac{1}{2^N}\sum_{g\in \mathcal{G}_\psi}g.
 \end{equation}
 Given a bipartition of the qubits into two sets $A$ and $B$, the reduced density matrix of $\rho$ over $A$ can be obtained by tracing equation  \eqref{eq_stab_density} over $B$, which yields
 \begin{equation}\label{eq_stab_red_density}
   \rho_A=\frac{1}{2^{n_A}}\sum_{g\in \mathcal{G}_{A,\psi} } g,
 \end{equation}
 where $n_A$ is the number of qubits in $A$ and $\mathcal{G}_{A,\psi}\subseteq \mathcal{G}_\psi$ is the subgroup of the stabilizers which are entirely contained in $A$, i.e. they act as identity on the qubits outside $A$. The von Neumann entanglement entropy of $\rho_A$ is given by
 \begin{equation}\label{eq_entropy_groupTheory}
   S_A(\ket{\psi})=n_A-\log_2|\mathcal{G}_{A,\psi}|,
 \end{equation}
 where $|\mathcal{G}|$ stands for the number of elements in group $\mathcal{G}$\cite{fattal2004entanglement}. Moreover, $R_\alpha(\rho_A)$, the Renyi entropy of order $\alpha$,  is actually independent of $\alpha$ and is equal to von Neumann entanglement entropy.

Let $\ket{\psi}$ be a stabilizer state specified by the stabilizer set $\mathcal{S}=\{s_1,s_2,\cdots,s_N\}$.
 Consider a closely related stabilizer state $\ket{\psi'}$ which is specified by the stabilizer set $\mathcal{S}'=\{(-1)^{n_1}s_1,(-1)^{n_2}s_2,\cdots,(-1)^{n_N}s_N \}$
 where each $n_i$ is either $0$ or $1$. The following Lemma shows that $\ket{\psi}$ and $\ket{\psi '}$ are indistinguishable as far as the entanglement entropy is concerned.
 \begin{lemma}\label{lm_indistinguishable}
 For $\ket{\psi}$ and $\ket{\psi'}$ defined as above and for any subset $A$ of the qubits,
 \begin{equation}
   S_A(\ket{\psi})=S_A(\ket{\psi'}).
 \end{equation}
 \end{lemma}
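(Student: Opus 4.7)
The plan is to show directly that $|\mathcal{G}_{A,\psi}| = |\mathcal{G}_{A,\psi'}|$ and then invoke the group-theoretic formula \eqref{eq_entropy_groupTheory}, which gives $S_A(\ket{\psi}) = n_A - \log_2|\mathcal{G}_{A,\psi}|$ (and similarly for $\ket{\psi'}$). Since the number of qubits $n_A$ in region $A$ does not depend on which state we consider, equality of the two subgroup sizes immediately yields the lemma.

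To produce the required equality, I would construct an explicit bijection $\Phi: \mathcal{G}_\psi \to \mathcal{G}_{\psi'}$ and then verify that it restricts to a bijection between $\mathcal{G}_{A,\psi}$ and $\mathcal{G}_{A,\psi'}$. Every element of $\mathcal{G}_\psi$ can be written uniquely as a product $g = \prod_{i=1}^N s_i^{a_i}$ with $a_i \in \{0,1\}$ (since the $s_i$ mutually commute and square to the identity). Define
\begin{equation}
  \Phi\!\left(\prod_{i=1}^N s_i^{a_i}\right) = \prod_{i=1}^N \bigl((-1)^{n_i} s_i\bigr)^{a_i} = (-1)^{\sum_i n_i a_i}\prod_{i=1}^N s_i^{a_i}.
\end{equation}
This map is manifestly a bijection on the level of sets, since it simply multiplies each element by a (possibly element-dependent) global sign, and the inverse is given by the analogous map built from the generators $(-1)^{n_i} s_i$ of $\mathcal{G}_{\psi'}$.

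The key observation is that $\Phi(g)$ and $g$ differ only by an overall scalar $\pm 1$, so they are supported on exactly the same set of qubits. In particular, $g$ acts as the identity outside $A$ if and only if $\Phi(g)$ does. Therefore $\Phi$ restricts to a bijection $\mathcal{G}_{A,\psi} \to \mathcal{G}_{A,\psi'}$, giving $|\mathcal{G}_{A,\psi}| = |\mathcal{G}_{A,\psi'}|$ and hence $S_A(\ket{\psi}) = S_A(\ket{\psi'})$.

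The only subtle point worth double-checking is that $\Phi$ is well-defined, i.e.\ independent of how one expresses a given group element as a product of the generators. Because the $s_i$ are required to be \emph{independent}, the exponents $a_i$ in the representation $g = \prod_i s_i^{a_i}$ are unique, so there is no ambiguity; this is the main technical point and it follows immediately from the definition of a stabilizer set. Everything else in the argument is formal bookkeeping with signs, so I expect no real obstacle beyond being careful about uniqueness of the exponent vector $(a_1,\ldots,a_N)$.
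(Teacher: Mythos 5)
Your proposal is correct and follows essentially the same route as the paper: the map $\Phi$ you write down is exactly the paper's generator-wise homomorphism $s_i \mapsto (-1)^{n_i} s_i$, and both arguments conclude by noting it restricts to a bijection between $\mathcal{G}_{A,\psi}$ and $\mathcal{G}_{A,\psi'}$ and then invoking equation~\eqref{eq_entropy_groupTheory}. You merely make explicit the sign bookkeeping and well-definedness that the paper labels ``straightforward to verify.''
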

 \begin{proof}
   Let $\mathcal{G}\psi$ and $\mathcal{G}_{\psi'}$ denote the stabilizer groups associated with $\ket{\psi}$ and $\ket{\psi'}$ respectively. Consider the group homomorphism $h$ between $\mathcal{G}_\psi$ and $\mathcal{G}_{\psi'}$ defined by its action on the generators of $\mathcal{G}_\psi$ as
   \begin{align*}
     &h: \mathcal{G}_\psi \longmapsto \mathcal{G}_{\psi '}\\
     &h(s_i)=(-1)^{n_i} s_i.
   \end{align*}
   Since $h$ maps a generator set to another, it is bijective. Moreover, it is straightforward to verify that for any subset $A$ of qubits, $h$ maps $\mathcal{G}_{A,\psi}$ to $\mathcal{G}_{A,\psi'}$. The Lemma's claim then follows immediately from equation ~\eqref{eq_entropy_groupTheory}.
 \end{proof}

 Given a stabilizer state $\ket{\psi}$, one has the freedom to choose any $N$ independent elements from $\mathcal{G}_\psi$ to form the stabilizer set $\mathcal{S}$. We can use this gauge freedom to impose certain conditions on the elements of $\mathcal{S}$.

 Define the left (right) endpoint of a stabilizer $s$ to be the first (last) site on which $s$ acts non-trivially. Given a set of stabilizers $\mathcal{S}$, let $\rho_l(i)$ denote the the number of stabilizers whose left endpoint resides on site $i$ and define $\rho_r(i)$ similarly with regard to the right end points. As is shown in Ref. \cite{nahum2017quantum}, one can always choose $\mathcal{S}$ such that
 \begin{enumerate}
   \item For all sites we have $\rho_r(i)+\rho_l(i)=2$.
   \item If $\rho_l(i)=2$ (or $\rho_r(i)=2$) for a site $i$, the two corresponding stabilizers have a different Pauli operator at $i$.
 \end{enumerate}
 Such a stabilizer set $\mathcal{S}$ is said to be in the clipped gauge \cite{li2019measurement}. The utility of the clipped gauge is that the entanglement entropy has a simple form in this gauge. In particular, if the stabilizer set $\mathcal{S}$ is in the clipped gauge, the entanglement entropy of a contiguous region $A$ equals to half the number of stabilizers in $\mathcal{S}$ which have one endpoint in $A$ and another in its complement\cite{li2019measurement}.

 \section{Steady state density matrix}\label{apx_std_rho}

Let $\mathcal{Q}$ denote a specific realization of the quantum circuit laid out in the main text. If we fix the initial state to be $\ket{0}^{\otimes N}$ and run the same circuit many times, due to the randomness in the measurement outcomes, the final state of the system could be different each time. Instead of considering quantum trajectories, we can calculate the expectation value of operators over different runs by viewing the measurements in $\mathcal{Q}$ as quantum channels. Accordingly, the entire circuit can be described as a quantum channel $\mathcal{E}_\mathcal{Q}$, which transforms the  initial pure density matrix $\rho_0=\qty(\ketbra{0}{0})^{\otimes N}$ to a mixed final density matrix $\rho_\ast=\mathcal{E}_\mathcal{Q}(\rho_0)$.

$\rho_\ast$ can be used to compute the expectation values of measurements which are averaged over many runs of the circuit without post-selection on the measurement outcomes. In particular, if we run the circuit $\mathcal{Q}$ many times, measure a fixed operator $O$ each time and average the result over a large number of runs, the value we get would be
\begin{equation}
  \bar O =\tr(O \rho_\ast).
\end{equation}

Here we show that with probability one, $\rho_\ast$ is actually independent of the underlying circuit. In other words, if $\mathcal{Q}$ is any fixed quantum circuit chosen with the distribution associated with probabilities $0<p_s,p_t<1$ and $p_u<1$, the final density matrix of the system is always given by
\begin{equation}
  \rho_\ast=\mathcal{E}_\mathcal{Q}(\rho_0)=\frac{1}{2^{N-2}}\Pi_{G_1,+}\Pi_{G_2,+}.
\end{equation}
where $\Pi_{G_i,+}$ is the projection operator on the $G_i=1$ subspace. Note that, not only does $\rho_\ast$ not depend on the specific realization $\mathcal{Q}$, but it is also independent of $p_s$ and $p_u$, which means that as far as the expectation value of operators is concerned, the entire phase space looks the same.  Moreover, we show that the time it takes for the density matrix to reach the steady state is constant for $p_u=0$ while it is at most $\mathcal{O}(N)$ for $p_u\ne 0$.

For a general Pauli string operator $S$, the quantum channel corresponding to its measurement is given by
\begin{equation}\label{equ_channel}
  \mathcal{E}_S(\rho)=\Pi_{S,+}~\rho~\Pi_{S,+}+\Pi_{S,-}~\rho~\Pi_{S,-}
\end{equation}
where $\Pi_\pm$ denote the projectors onto $S=\pm 1$ subspaces, i.e.
\begin{equation}
  \Pi_{S,\pm}=\frac{1}{2}(\mathbbm{1}\pm S).
\end{equation}
By using the explicit form of the projectors $\Pi_{S,\pm}$, equation  \eqref{equ_channel} can be written as
\begin{equation}\label{eq_channel_stb}
  \mathcal{E}_S(\rho)=\frac{1}{2}\qty(\rho+S~\rho~S).
\end{equation}

Consider a mixed stabilizer state $\rho$
\begin{equation}
  \rho(\mathcal{G})=\frac{1}{2^N}\sum_{g\in\mathcal{G}}g,
\end{equation}
for a Pauli group $\mathcal{G}=\langle e_1,\cdot,e_n \rangle$ with $n\le N$ independent generators. According to equation  \eqref{eq_channel_stb}, under the measurement of a Pauli string $S$, we have
\begin{align}
  \mathcal{E}_{S}(\rho)&=\frac{1}{2^{N+1}}(\sum_{g\in \mathcal{G}} g + \sum_{g\in \mathcal{G}} S g S)\\
                       &=\frac{1}{2^{N}}\sum_{g\in C_\mathcal{G}(S)} g\\
                       &=\rho(C_\mathcal{G}(S)).\label{eq_rho1}
\end{align}
Here $C_\mathcal{G}(S)$ is the centralizer of $S$ in $\mathcal{G}$. If $S$ commutes with all elements in $\mathcal{G}$, clearly $C_\mathcal{G}(S)=\mathcal{G}$. Otherwise, without loss of generality, we can assume $S$ commutes with all generators of $\mathcal{G}$
except one of them, say $e_n$. Thus $C_\mathcal{G}(S)=\langle e_1,\cdots, e_{n-1}\rangle$.

The above analysis shows that for a mixed stabilizer state, whenever a Pauli string is measured, it either leaves the density matrix untouched or takes it to another mixed stabilizer state with one less generator, depending on whether the measured Pauli string commutes with the corresponding Pauli group or not.

In our case, the initial state of the system is given by
\begin{equation}\label{eq_init_dnst}
  \rho(\mathcal{G}_0)=\frac{1}{2^N}\sum_{g\in\mathcal{G}_0}g,\quad \mathcal{G}_0=\langle G_1,G_2,Z_2,\cdots,Z_{N-1}\rangle.
\end{equation}

Let us first consider the $p_u=0$ case. Based on the discussion above, it is clear that $Z_i$ measurements never change $\rho$ and thus can be ignored for our purpose. Each time a stabilizer $g_i$ is measured, $\rho(\mathcal{G})$ is transformed to $\rho(C_\mathcal{G}(g_i))$. Note that in general we have
\begin{equation}
  C_{C_\mathcal{G}(S_1)}(S_2)=C_\mathcal{G}(\{S_1,S_2\}) .
\end{equation}
Thus after all stabilizers $g_i$ have been measured at least once, the density matrix of the system would be given by:
\begin{align}
  \rho(C_{\mathcal{G}_0}(\{g_2,\cdots,g_{N-1}\}))&=\rho(\langle G_1,G_2 \rangle)\nonumber \\
                                                &=\frac{1}{2^{N-2}}\Pi_{G_1,+}\Pi_{G_2,+}=\rho_\ast.
\end{align}
Let $m_j$ denote the updating step at which $g_j$ is measured for the first time. It is easy to show that $\mathbb{E}[m_j]=(N-2)/p_t$, where $\mathbb{E}[X]$ denotes the expectation value of $X$. Therefore, the average time it takes for the system to reach the steady state $\rho_\ast$ would be
\begin{equation}
  \tau_\ast=\frac{1}{N}\mathbb{E}[\max_j(m_j)]=\frac{1}{N}\max_j(\mathbb{E}[m_j])=\mathcal{O}(1),
\end{equation}
where the pre-factor $1/N$ is there to convert updating steps to time steps.

Now consider the $p_u\ne 0$ case. Again, we start by the same initial density matrix given by equation  \eqref{eq_init_dnst}. Each time a measurement is performed, either $Z_i$ or $g_i$, the Pauli group associated with the density matrix of the system either remains the same or shrinks to one of its subgroups with one less generator, as was explained above. On the other hand, whenever a Clifford unitary $U$ is applied, it just transform $\rho(\mathcal{G})$ to $\rho(U^\dagger  \mathcal{G} U)$ with the same number of generators. Now, note that any Pauli group that commutes with every element in the set $\mathcal{M}=\{Z_i\}_{i=1}^N \cup \{g_i\}_{i=2}^{N-1}$ should be a subgroup of
$\langle G_1, G_2 \rangle$ (or the ones which are obtained by substituting $G_i$ with $-G_i$). Therefore, for any Pauli group $\mathcal{G}$ with more than two generators, there is at least one element of $\mathcal{M}$ that does not commute with $\mathcal{G}$. Ergo, at each updating step with probability of at least $\text{min}(p_s/N, p_t/(N-2))$, the Pauli group associated with the density matrix would shrink to a subgroup with one less generator, until only two generators $G_1$ and $G_2$ remain. Thus, on average, at most it takes $\mathcal{O}(N)$ updating steps until a stabilizer is measured which decreases the number of generators by one. Since we start with $N$ generators, the average time it takes to reach the steady state with only $G_1$ and $G_2$ as generators, i.e. $\rho_\ast$, would be:
\begin{equation}
  \tau_\ast \le \frac{1}{N}(N-2)\mathcal{O}(N)=\mathcal{O}(N).
\end{equation}

\section{String order parameters}\label{apx_stringOP}
The analysis in Supplementary Section  \ref{apx_std_rho} shows that the steady state density matrix is equal throughout the phase diagram and as such one could not detect the phase transition by averaging expectation values of operators over different realizations of the circuit. However, quantities like the ensemble average of the expectation value squared, which are not expressible in terms of the density matrix, could still be used as the order parameter.
As an example, consider the string order parameter introduced in Ref.\cite{perez2008string} to detect the SPT phase of the cluster model,
\begin{equation}
  s_{i,j}=\expval{X_{i-1}~Y_{i}~\prod_{k=i+1}^{j-1}Z_k ~ Y_{j}X_{j+1}}{\psi}=\expval{\prod_{k=i}^j g_k}{\psi}.
\end{equation}
Clearly, $s_{i,j}=\pm 1$ for symmetry invariant eigenstates of the cluster Hamiltonian while it is $0$ for any product states in the computational basis. If one considers the ensemble average of $s_{i,j}$ which could be expressed as $\overline{s_{i,j}}=\tr(\prod_{k=i}^j g_k\rho)$, it will zero even in the topological phase due to the cancelation between the terms with plus and minus signs. However this could be circumvented by considering the average over $s^2_{i,j}$ instead. In particular, the following order parameter, which can be viewed as the non-local analogue of the Edwards–Anderson glass-order parameter for SPT phases\cite{bahri2015localization,chandran2014many,sang2020measurement}, can be used to detect the phase transition:
\begin{equation}\label{eq_sop}
s=\frac{1}{N}\sum_{i<j} s_{i,j}^2.
\end{equation}
In the trivial phase, $s=0$ while in the SPT phase $s$ scales linearly with $N$. Fig.\ref{fig_s} shows the ensemble average of $s$, as a function of $p_s$ for different system sizes (all in $p_u=0$).
\begin{figure}
  \centering \includegraphics[width=\columnwidth]{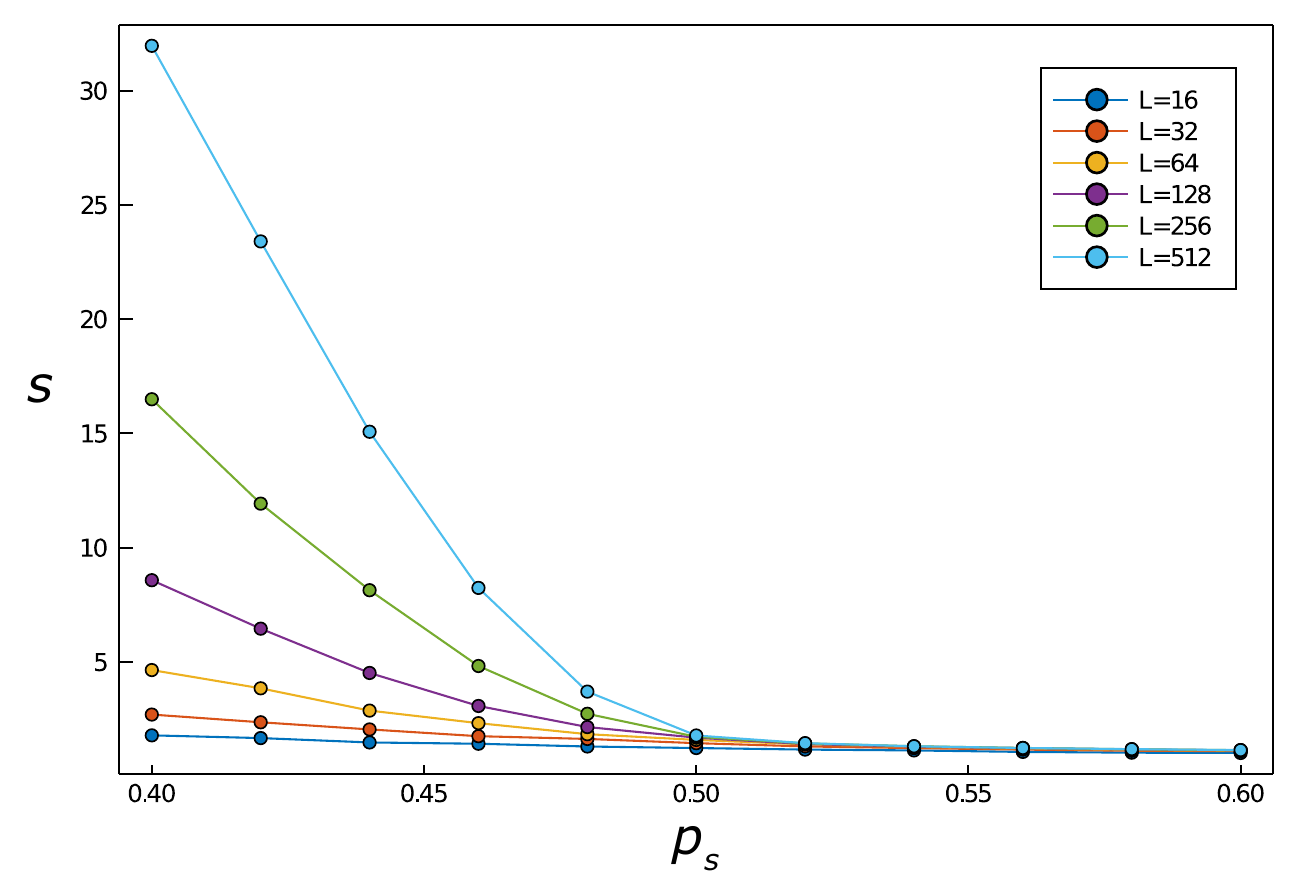}
  \caption{The string order parameter $s$ versus $p_s$ for different system sizes at $p_u=0$.}
  \label{fig_s}
\end{figure}

In contrast to $S_\text{topo}$, this order parameter can not distinguish between the volume-law phase and the trivial phase because it would be a constant in both phases\cite{sang2020measurement}.

 \section{Mapping the random circuit model to an unconventional percolation model}
 \label{apx_percolationMapMain}

Here we add some additional details regarding the the unconventional classical percolation model that corresponds to the random quantum circuit studied in the manuscript.

We focus on one of the two percolation lattices, say the odd sites' percolation lattice. Consider the $N/2 \times M$ square lattice of vertices. The connections on the $m$'th row and the connections between the rows $m$ and $m-1$, which are related to the measurement at step $m$, are determined as follows:
\begin{itemize}
  \item  With probability $\frac{1}{2}$ all vertical links between rows $m-1$ and $m$ are connected and all horizontal links on the $m$'th row are broken. This corresponds to the case where at step $m$, one of the even site operators is measured.
  \item With probability $\frac{1}{2}p_s$ a random vertical link between rows $m-1$ and $m$ is broken while all the others are connected. Also, all horizontal links on the $m$'th row are broken. This corresponds to the case where at step $m$, a single operator $Z_i$ is measured.
  \item  With probability $\frac{1}{2}(1-p_s)$ all vertical links between rows $m-1$ and $m$ as well as a random horizontal link on row $m$ are connected. All the other horizontal links on the $m$'th row are broken. This corresponds to the case where at step $m$, one of the stabilizer operators is measured.
\end{itemize}
As can be seen, the vertical links are mostly connected while the horizontal links are mostly broken. However, given that $M$ is equal to $2N^2$ ($2N$ time steps where each time step is consisted of $N$ updating steps), the lattice is elongated along the vertical direction.

As discussed in the main text, the entanglement entropy of the state at step $M$ can be read off entirely from the properties of the percolation lattice.

As is discussed in the main text the critical exponents $\nu$ and $z$ of this percolation model are the same as of the standard link percolation model on square lattice, thus suggesting it to be in the same universality class. Moreover, even the critical probability $p_c$ is the same. We also find that the coefficients of the entanglement entropy $a_x$ and $a_t$ computed in this model coincide with those of the circuit model defined in Supplementary Section  \ref{apx_layered_circuit_model}, which does map exactly to standard percolation on the square lattice.

\section{The duality map}\label{apx_U_proof}

 For simplicity, consider the system with periodic boundary conditions. Let us define the Clifford unitary $U_d$ such that for $i=1,\cdots,N$,
 \begin{align}
   {U_d}~X_i~U_d^\dagger &= X_i\label{eq_duality_X}\\
   {U_d}~Z_i~U_d^\dagger &= X_{i-1}~Z_i~X_{i+1}\label{eq_duality_g}
 \end{align}
 Note that under $U_d$, the stabilizer $g_i$ transforms as
 \begin{equation}\label{eq_duality_Z}
   {U_d}~g_i~ U_d^{\dagger}=Z_i.
 \end{equation}
 equation  \eqref{eq_duality_Z} and equation  \eqref{eq_duality_g} together show that the ensemble of random quantum circuits at $p_s$ (and $p_u= 0$) is mapped to the ensemble of random quantum circuit at $1-p_s$ (and $p_u = 0$) under $U_d$. However, the unitary $U_d$ is not local, i.e. it cannot be written as the tensor product of on-site unitaries and therefore does not keep the entanglement structure invariant. Nonetheless, it is clear from equation  \eqref{eq_duality_X} and equation  \eqref{eq_duality_g} that $U_d$ maps local stabilizers to local stabilizers. Since the entanglement in stabilizer states is related to the number of independent stabilizers that traverse the boundary of a region\cite{fattal2004entanglement}, one can still say that $U_d$ maps a state with the area-law entanglement to an area-law entangled state. Hence, if there exists a continuous phase transition which has logarithmic entanglement scaling, it has to occur at $p_s = p_c=1/2$. In the following this argument is made rigorous.

 Consider the stabilizer state $\ket{\psi}$ specified by the stabilizer set $\mathcal{S}=\{s_1,s_2,\cdots,s_N\}$ (see Supplementary Section  \ref{apx_stabilizer_formalism} for the notation). Let $A$ be the subset of qubits in the segment starting from the qubit $q_l$ and ending with the qubit $q_r$. Without loss of generality, we assume $s_1,\cdots,s_m$ generate the subgroup $\mathcal{G}_{A,\psi}$. Thus,
 \begin{equation}
   \mathcal{G}_{A,\psi}=\langle s_1,s_2,\cdots,s_m \rangle,\qquad \log_2 |\mathcal{G}_{A,\psi}|=m.
 \end{equation}
 We can also assume that at most $2$ stabilizers from the set $\{s_1,s_2,\cdots,s_m\}$ have non-trivial support on $q_l$ because if it is not the case, we can always use two of them to cancel the support of the others on $q_l$ by considering their multiplication with the rest. The same statement holds for $q_r$ as well.

 Now consider the state $\ket{\psi'}=U_d\ket{\psi}$ which corresponds to the stabilizer set $\mathcal{S'}=\{U_d s_1 U_d^\dagger, U_d s_2 U_d^\dagger,\cdots, U_d s_N U_d^\dagger\}$. Since $U_d$ moves the endpoint of a stabilizer by at most one site, at least $m-4$ out of $m$ stabilizers in $\{ U_d s_1 U_d^\dagger, U_d s_2 U_d^\dagger,\cdots, U_d s_m U_d^\dagger\}$ are still contained in $A$,
 which means $\log_2|\mathcal{G}_{A,\psi'}| \ge m-4$. Therefore,
 \begin{equation}\label{eq_ubound}
   S_{A}(U_d\ket{\psi})\le S_A(\ket{\psi})+4,
 \end{equation}
  which shows that if $\ket{\psi}$ has area-law entanglement, $U_d\ket{\psi}$ should have area-law entanglement as well.

  \section{Graphical representation of the state}
  \label{apx_graphical_rep}

 In this section we develop a graphical description to follow the system's state as it evolves under the random quantum circuit described in the main text for $p_u=0$. Moreover, this graphical representation provides the basic intuition behind the percolation mapping.

 The initial state of the system is given by the stabilizer set $\mathcal{S}_0=\{Z_1,Z_2,\cdots,Z_N\}$. At each step of the circuit, we measure either a stabilizer $g_i$ or a single qubit operator $Z_i$ on a qubit and update the stabilizer set accordingly. Let $\mathcal{S}_m$ denote the stabilizer set that corresponds to the system's state after $m$ updating steps.

 First, we prove the following Lemma:

 \begin{lemma}\label{lm_odd_even}
 $\mathcal{S}_m$ can be chosen such that each of its elements, up to a minus sign, is in one of the following forms:
 \begin{align}\label{eq_stab_strings}
  Z_{2j+1,2k+1}&\equiv\prod_{i=j}^k Z_{2i+1},\nonumber\\
  Z_{2j,2k}&\equiv \prod_{i=j}^k Z_{2i},\nonumber\\
  g_{2j+1,2k+1}&\equiv \prod_{i=j+1}^k g_{2i}=X_{2j+1} Z_{2j+2,2k} X_{2k+1},\nonumber\\
  g_{2j,2k}&\equiv\prod_{i=j}^{k-1} g_{2i+1}=X_{2j} Z_{2j+1,2k-1} X_{2k},
 \end{align}
 for some integers $j$ and $k$.
 \end{lemma}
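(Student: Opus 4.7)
The plan is to proceed by induction on the number of updating steps $m$. The base case is immediate: the initial stabilizer set $\mathcal{S}_0 = \{Z_1, \ldots, Z_N\}$ consists of single-site $Z$'s, each already of the form $Z_{i,i}$ (type~1 if $i$ is odd, type~2 if $i$ is even).

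For the inductive step the key structural observation I would exploit is that the measured operators split into two mutually commuting families: the \emph{odd-site operators} $\{Z_i : i \text{ odd}\} \cup \{g_j : j \text{ even}\}$ and the \emph{even-site operators} $\{Z_i : i \text{ even}\} \cup \{g_j : j \text{ odd}\}$. (This is exactly the splitting used in the main text to define the two percolation lattices.) Every element of one family commutes with every element of the other, so the stabilizer group factorizes throughout the evolution as $\mathcal{G}_m = \mathcal{G}_m^o \cdot \mathcal{G}_m^e$, where $\mathcal{G}_m^o$ (respectively $\mathcal{G}_m^e$) is the subgroup obtained by acting with the odd-site (resp.\ even-site) measurements on the initial state. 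Since $Z_i$ for odd $i$ is already of form~1 and $g_j$ for even $j$ is already of form~3 (namely $g_{j-1,j+1}$), and symmetrically for the even sector, it suffices to show that $\mathcal{G}_m^o$ admits a generating set consisting only of forms~1 and~3; the argument for $\mathcal{G}_m^e$ with forms~2 and~4 then follows by the obvious parity swap.

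Within the odd-site sector I would apply the standard stabilizer update rule at each step: identify the generators of $\mathcal{S}_m^o$ that anticommute with the new measurement $P$, multiply them pairwise so that only one remains anticommuting, and then replace that pivot by $\pm P$. The core technical claim to verify by direct Pauli algebra is a closure property for the four forms under these products. If $P = Z_i$ with $i$ odd, then $Z_i$ commutes with every form~1 string (only $Z$'s on odd sites) and with every form~3 generator \emph{except} those whose $X$-endpoint sits exactly at site $i$; any two form~3 generators sharing an endpoint at $i$ multiply to another form~3 generator, because the two copies of $X_i$ cancel and the $Z$-supports on the intervening even sites either overlap and cancel or join to form a single contiguous even-site range. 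If $P = g_j$ with $j$ even, then $g_j$ commutes with every form~3 generator (all $g$'s mutually commute) and with every form~1 string whose odd-site range either contains both of $j\pm 1$ or avoids both; the only anticommuting form~1 generators are those with an endpoint at $j-1$ or $j+1$, and two such adjacent form~1 generators always multiply back into a contiguous odd-site $Z$-string. Once this closure is in hand the pivot-and-multiply update preserves the invariant, and replacing the pivot by $\pm P$ appends a form~1 element (if $P = Z_i$) or a form~3 element (if $P = g_j$).

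The step that I expect to demand the most care is precisely this closure verification. What makes it work is that the anticommuting set is always of a \emph{single} form (form~3 when $P = Z_i$, form~1 when $P = g_j$), which rules out the problematic mixed products of a form~1 with a form~3 generator -- such mixed products would in general produce Pauli strings with $X$'s at even sites and $Z$'s on both parities, and they would fall outside the four templates. This single-type-anticommutation property is the algebraic reflection of the commuting odd/even decomposition, and is ultimately what underlies the percolation mapping of Proposition~\ref{prp_coloring}. Once the closure is established the induction closes, and combining the independent odd- and even-sector generating sets produces the required $\mathcal{S}_m$ in which every element is of one of the four listed forms.
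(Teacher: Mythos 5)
Your proposal is correct and follows essentially the same route as the paper's proof: induction on the updating step, the Gottesman--Knill pivot-and-multiply update, and the observation that the operators anticommuting with a measured $Z_i$ (odd $i$) or $g_j$ (even $j$) are all of a single form whose pairwise products close within the four templates. Your explicit odd/even sector factorization and closure checks simply spell out what the paper compresses into ``the other possibilities can be treated similarly,'' so no substantive difference or gap remains.
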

 \begin{proof}
 We prove the lemma by induction. The claim is clearly true for $\mathcal{S}_0$. Assume it is true for $\mathcal{S}_m$.  First consider the case where we measure $Z_{2j+1}$ in the next step. We follow the procedure proscribed by the Gottesman-Knill theorem \cite{gottesman1998heisenberg,nielsen2002quantum} to obtain the stabilizer set $\mathcal{S}_{m+1}$. If $Z_{2j+1}$ commutes with every stabilizer in $\mathcal{S}_m$, then nothing happens by measuring $Z_{2j+1}$, hence $\mathcal{S}_{m+1}=\mathcal{S}_m$. So consider the case where some elements of $\mathcal{S}_m$ anti-commute with $Z_{2j+1}$.

 Any element of $\mathcal{S}_m$ that does not commute with $Z_{2j+1}$ has either the form $g_{2j+1,2k+1}$ or $g_{2k+1,2j+1}$ for some $k$.  If there is only one of them, then one only needs to replace it with $\pm Z_{2j+1}$ (with the sign chosen arbitrarily) to obtain $\mathcal{S}_{m+1}$. If there are more than one, we replace the first one with $ \pm Z_{2j+1}$, again with the sign chosen arbitrarily, and multiply the others with the stabilizer that was replaced, to get $\mathcal{S}_{m+1}$. In either cases, $\mathcal{S}_{m+1}$ will have the stated form.

 The other possibilities, i.e. measuring other operators at step $m+1$, can be treated similarly.
 \end{proof}

 Based on Lemma \ref{lm_odd_even}, we can use a diagrammatic notation to specify $\mathcal{S}_m$; we put $N$ dots along a line representing the qubits, as is shown in Fig.~\ref{fig_diagramatics}. Then, for every $g_{a,b}$ element in $\mathcal{S}_m$, we draw a line between sites $a,b$ from below and for every $Z_{a,b}$ element in $\mathcal{S}_m$ draw a line from above. Fig. \ref{fig_diagramatics}a and Fig. \ref{fig_diagramatics}b show the diagrams corresponding to $\mathcal{S}=\{Z_i\}_{i=1}^N$ and $\mathcal{S}=\{g_i\}_{i=1}^N$ respectively, with $g_1$ and $g_N$ defined as $g_1 \equiv G_1$ and $g_N \equiv G_2$.

 \begin{figure}
   \includegraphics[width=0.7\columnwidth]{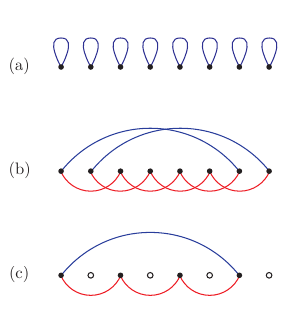}
   \caption{ The diagrammatic representation of the quantum state. \textbf{a}, The diagrammatic representation of the product state $\ket{0}^N$. \textbf{b}, The diagrammatic representation of the stabilizer state specified by $\mathcal{S}=\{g_i\}_{i=1}^N$. \textbf{c} the same as b, but including just the odd sites. }
   \label{fig_diagramatics}
 \end{figure}

 The form of the stabilizers listed in Lemma \ref{lm_odd_even} suggests a decomposition of the system into odd and even sites. Note that if we measure, for example, $Z_{2i+1}$, the only stabilizers that could be replaced are in the form $g_{2j+1,2k+1}$ while the $g_{2j,2k}$ stabilizers whose endpoints reside on even sites remain unchanged. Also, if one measures $g_{2i-1,2i+1}=g_{2i}$ whose ends points are on odd sites, the only stabilizers that could change have the form $Z_{2j+1,2k+1}$. So, if the stabilizer we are measuring has endpoints on odd sites, we only need to know about the stabilizers in $\mathcal{S}_m$ which also end on odd sites to find $\mathcal{S}_{m+1}$.  In other words, we can keep track of the set of stabilizers that start and end on odd sites, without knowing anything about the other stabilizers which start and end on the even sites and vice versa. This allows us to consider odd sites and even sites separately. Fig. \ref{fig_diagramatics}c shows the same state as in Fig. \ref{fig_diagramatics}b but restricted to odd sites only. For simplicity, we will only consider odd sites in what follows, while similar statements hold for even sites as well.

 \begin{figure}
   \includegraphics[width=\columnwidth]{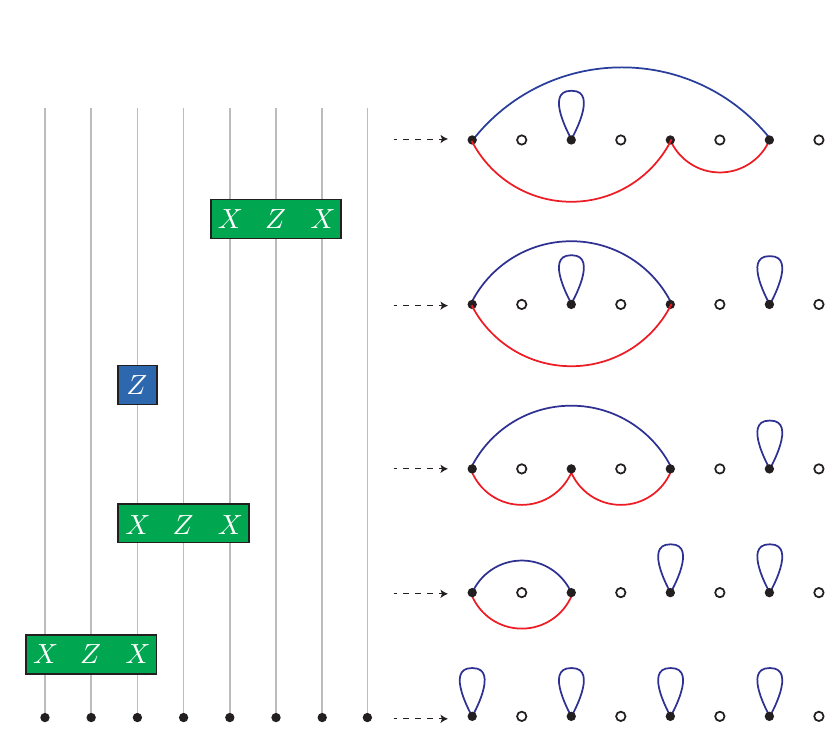}
   \caption{Step by step evolution of the system under the quantum circuit shown on the left. The diagrammatic representation of the system's state is shown on the right after each measurement.}
   \label{fig_diagramatics_clustering}
 \end{figure}
Using this diagrammatic formalism, it is easy to track $\mathcal{S}_m$. Fig.~\ref{fig_diagramatics_clustering} shows a typical quantum circuit and the step by step evolution of the system's stabilizer set using the diagrammatic notation developed above.

\section{Proof of Proposition 1}\label{apx_percolation_map}

 \begin{figure*}
   \includegraphics[width=\textwidth]{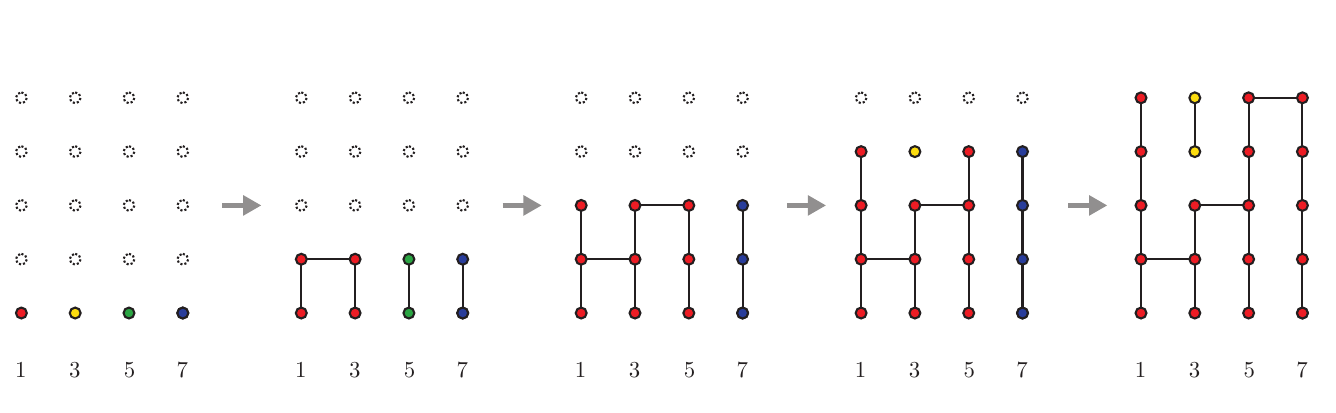}
   \caption{Step by step evolution of the system under the quantum circuit shown in Fig. \ref{fig_diagramatics_clustering} in the percolation picture for the odd sites. At each step, the sites with the same color on the last row represent an isolated SPT phase. There is an analogous diagram for the even sites. The two diagrams together fully specify the entanglement structure of the system.}
   \label{fig_percole_step_by_step}
 \end{figure*}
  We start by noting that for the circuit shown in Fig. \ref{fig_diagramatics_clustering}, the state of the system can always be described as a collection of isolated SPT states and decoupled qubits, as can be seen from the accompanying diagrammatic representation. For example, in the final state, qubits $1$, $5$ and $7$ form an isolated SPT state, while qubit $3$ is decoupled. This observation is indeed true in general. We start by putting forward a precise definition of an isolated SPT state and then show that there is an efficient description of $\mathcal{S}_m$ as a partition of $\{ 1,2, \cdots,N\}$.
 \begin{defini}
   Consider a set of numbers $A=\{q_i\}_{i=1}^n$, such that,
   \begin{equation}
     1\le q_1 < q_2 <\cdots<q_n \le N.
   \end{equation}
   Assume that all numbers are either odd or even. Define its associated stabilizer set, denoted by $\EuScript{S}(A)$, as
   \begin{equation}\label{eq_isolated_SPT_stabs}
     \EuScript{S}(A)=\{ g_{q_i,q_{i+1}} \}_{i=1}^{n-1} \cup \{ \prod_{i=1}^n Z_{q_i}\}.
   \end{equation}
   We call such a stabilizer set, an isolated SPT state.
 \end{defini}
 Note that the $g$ stabilizers in $\EuScript{S}(A)$ generate the set of all $g$ strings between any two points in $A$. Also, note that the stabilizers in equation  \eqref{eq_isolated_SPT_stabs} are the same as the ones appearing in Proposition 1.

 \begin{lemma}\label{lm_isolated_SPT}
   Let $\mathcal{S}_m$ denote the stabilizer set that corresponds to the system's state after $m$ updating steps. $\mathcal{S}_m$ can always be chosen such that, up to minus signs,
   \begin{equation}\label{eq_isolated_SPTs}
     \mathcal{S}_m=\cup_i \EuScript{S}(A_i),
   \end{equation}
   where $A_i$s correspond to a partition of the qubits into disjoint sets,
   \begin{equation}
     \cup_i A_i=\{1,2,\cdots,N\}.
   \end{equation}
   and $\EuScript{S}(A_i)$ denotes the isolated SPT state corresponding to subset $A_i$.
 \end{lemma}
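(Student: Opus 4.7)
The plan is to proceed by induction on $m$. In the base case $m=0$, the initial stabilizer set $\mathcal{S}_0=\{Z_i\}_{i=1}^{N}$ is manifestly $\bigcup_{i=1}^{N}\EuScript{S}(\{i\})$ for the singleton partition $A_i=\{i\}$. For the inductive step, assume $\mathcal{S}_m=\bigcup_l \EuScript{S}(A_l)$ and examine the measurement performed at step $m+1$. By Lemma \ref{lm_odd_even} the odd-site and even-site sublattices evolve independently and each $A_l$ sits entirely on one parity, so it suffices to analyze the odd-site subsystem.

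The first case is a single-qubit $Z_j$ measurement with $j$ odd. Let $A_i=\{q_1<\cdots<q_n\}$ be the block containing $j=q_k$. If $n=1$ then $Z_j\in\EuScript{S}(A_i)$ and nothing changes. Otherwise the only stabilizers in $\bigcup_l\EuScript{S}(A_l)$ anti-commuting with $Z_j$ are the (at most) two nearest links $g_{q_{k-1},q_k}$ and $g_{q_k,q_{k+1}}$: every other $g$-link either does not touch $q_k$ or belongs to another block, and the $Z$-products all commute with $Z_j$. The standard Gottesman--Knill update then replaces one anti-commuting link by $\pm Z_j$ and multiplies the other by it, producing $\pm g_{q_{k-1},q_{k+1}}$ via the telescoping identity $g_{a,c}\,g_{c,e}=g_{a,e}$. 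Combining the new $\pm Z_j$ with the old product $\prod_a Z_{q_a}$ yields $\pm\prod_{a\ne k}Z_{q_a}$, so up to signs the updated stabilizer set equals $\EuScript{S}(A_i\setminus\{j\})\cup\EuScript{S}(\{j\})\cup\bigcup_{l\ne i}\EuScript{S}(A_l)$, and the block $A_i$ splits into $A_i\setminus\{j\}$ and $\{j\}$.

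The second case is a measurement of $g_j$ with $j=2\ell$, which on the odd sublattice acts as $g_{2\ell-1,2\ell+1}$. A short Pauli-string count shows that this operator commutes with every link $g_{q_a,q_{a+1}}$ (both are $X$-at-endpoints with $Z$'s on even sites only, and $Z\cdot Z=I$ on any shared even site) and anti-commutes with $\prod_{q\in A_l}Z_q$ exactly when one of $2\ell\pm1$ lies in $A_l$. If $\{2\ell-1,2\ell+1\}\subseteq A_i$, then $g_{2\ell-1,2\ell+1}$ is already generated by the chain of consecutive links of $\EuScript{S}(A_i)$ joining these two qubits, and the state is unchanged. If instead $2\ell-1\in A_i$ and $2\ell+1\in A_{i'}$ with $i\ne i'$, Gottesman--Knill replaces $\prod_{A_i}Z$ by $\pm g_{2\ell-1,2\ell+1}$ and multiplies $\prod_{A_{i'}}Z$ by $\prod_{A_i}Z$ to form $\pm\prod_{A_i\cup A_{i'}}Z$. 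The surviving links of the two blocks together with the new one regenerate every consecutive-pair link of the sorted union $A_i\cup A_{i'}$ through repeated applications of $g_{a,b}\,g_{c,b}=g_{a,c}$ and $g_{a,b}\,g_{b,c}=g_{a,c}$, so the updated stabilizer set equals $\EuScript{S}(A_i\cup A_{i'})\cup\bigcup_{l\notin\{i,i'\}}\EuScript{S}(A_l)$ and the partition updates by merging $A_i$ with $A_{i'}$.

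The main obstacle I foresee is verifying the regeneration step when $A_i$ and $A_{i'}$ are interleaved on the chain: one must check that the combined consecutive-pair links of the two blocks, together with the new link $g_{2\ell-1,2\ell+1}$, really generate every consecutive-pair link of the sorted union, even when the two sequences $\{q_a\}$ and $\{r_b\}$ alternate. This reduces to the telescoping identities quoted above, which follow directly from the explicit form $g_{i,j}=X_i\bigl[\prod_{k}Z_{i+2k+1}\bigr]X_j$ together with $X_s^2=Z_s^2=I$. All sign ambiguities that arise along the way are explicitly permitted by the statement of the lemma and, by Lemma \ref{lm_indistinguishable}, are invisible to any entanglement quantity.
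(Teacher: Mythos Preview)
Your proposal is correct and follows essentially the same inductive argument as the paper: the same base case, the same two cases in the inductive step (a $Z$ measurement splits a block, a $g$ measurement merges two blocks), and the same Gottesman--Knill bookkeeping. The only substantive difference is that where the paper says ``it is straightforward to verify'' that the post-merge stabilizer set equals $\EuScript{S}(A_i\cup A_{i'})$, you actually carry out that verification, including the interleaved case, via the telescoping identities $g_{a,c}g_{c,e}=g_{a,e}$; this is a genuine (and correct) elaboration rather than a different route.
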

 \begin{proof}
   We prove it by induction. It is obviously true for $\mathcal{S}_0$ with $A_i=\{i\}$ for $i=1,\cdots,N$.

   Now assume it is true for $\mathcal{S}_m$, so there exists a partition of qubits given by $\{1,2,\cdots,N\}=\cup_i A_i$ such that $\mathcal{S}_m=\cup_i \EuScript{S}(A_i)$. First consider the case in which a single qubit operator $Z_{2j+1}$ is measured in the next step. Suppose $2j+1$ is in subset $A_k$ for some $k$. Note that $Z_{2j+1}$ commutes with any element in $\EuScript{S}(A_{k'})$ with $k'\ne k$. If $A_k$ is the single element set $\{ 2j+1\}$
   (which means  $\EuScript{S}(A_{k})=\{Z_{2j+1}\}$)
   then $Z_{2j+1}$ is already in $S_m$ and thus $S_{m+1}=S_m$. If $A_k$ has more than one element, we will show that measuring $Z_{2j+1}$ corresponds to breaking $A_k$ to two subsets of $A_{k}\setminus \{ 2j+1 \}$ and $\{ 2j+1\}$.

  Note that $Z_{2j+1}$ anti-commutes only with the $g_{a,b}$ elements in $\EuScript{S}(A_k)$ where either $a$ or $b$ equals $2j+1$. If $2j+1$ is the smallest or largest number in $A_k$, there is only one such element and by measuring $Z_{2j+1}$, we just need to replace that element by $Z_{2j+1}$ (with an arbitrary sign) to get the updated stabilizer set $S_{m+1}$. If $2j+1$ is neither the smallest nor the largest number in $A_k$, there are two such elements, $g_{a,2j+1}$ and $g_{2j+1,b}$ for some odd numbers $a$ and $b$. Thus by measuring $Z_{2j+1}$, one is replaced by $Z_{2j+1}$ (with an arbitrary sign) and the other by $g_{a,2j+1}g_{2j+1,b}=g_{a,b}$ to get the updated stabilizer set. It is easy to verify that in both cases, $S_{m+1}$ is equivalent to the stabilizer set obtained by the union of isolated SPT states corresponding to the same partitioning as for $S_m$, but with $A_k$ broken to two sets of $A_{k}\setminus\{ 2j+1 \}$ and $\{ 2j+1\}$.

  Next consider the case where an stabilizer $g_{2j-1,2j+1}=g_{2j}$ is measured in the next step. If $2j-1$ and $2j+1$ belong to the same subset in the partition, nothing happens. If not, let say one belongs to $A_k$ and the other to $A_{k'}$, then $g_{2j-1,2j+1}$ anti-commute with the two $Z$ chains in $\EuScript{S}(A_k)$ and $\EuScript{S}(A_{k'})$ and commutes with everything else in $\mathcal{S}_m$. Therefore, by measuring $g_{2j-1,2j+1}$, we replace one of the $Z$ chains with $\pm g_{2j-1,2j+1}$ (with an arbitrary sign) and the other with the product of the two $Z$ chains, which is just the $Z$ chain over $A_k \cup A_{k'}$, to get $S_{m+1}$.
  It is straightforward to verify that $S_{m+1}$ is equivalent to the stabilizer set obtained by union of isolated SPT states corresponding to the same partitioning as for $S_m$, but by merging the two subsets $A_k$ and $A_{k'}$ into a single subset $A_k \cup A_{k'}$.
 \end{proof}

Based on Lemma \ref{lm_isolated_SPT}, there is a one-to-one mapping between partitions of $\{1,2,\cdots,N\}$ and the state of the system. Moreover, as can be seen from the Lemma's proof, the dynamics of the system can be translated into merging and splitting of the subsets.

Let us specify a partition by assigning unique colors to the qubits in the same subset. Then, whenever a $Z$ operator is measured, a new unique color should be assigned to the corresponding qubit to account for the new single element subset that is created in the new partitioning. On the other hand, when a $g$ operator with end points in different subsets is measured, the two subsets merge together which translates into assigning the same color to qubits in either one. The dynamics we have just described emerges naturally in the percolation model and thus can be used to map the quantum circuit to an instance of percolation on the square lattice.  We use a $N/2 \times M$ square lattice, where $M$ is the total number of updating steps. The $m$'th row of the lattice corresponds to the state of the system after the updating step $m$. We start by $N/2$ dots with distinct colors at the lowest row which corresponds to the initial product state. If $Z_{2i+1}$ is measured at step $m$, we leave the vertical link between $(2i+1,m)$ vertex and its history at $(2i+1,m-1)$ broken, while connecting all the other vertical links between the rows $m$ and $m-1$. By doing so, the $(2i+1,m)$ vertex gets a new color, while all the other vertices retain their color form the previous row, which agrees with the aforementioned splitting. On the  other hand, if a stabilizer $g_{2i-1,2i+1}$
 is measured at step $m$, we first connect all the vertical links between the rows $m$ and $m+1$, and then connect the vertices at $(2i-1,m)$ and $(2i+1,m)$ to enforce their colors to be the same, thus accounting for the aforementioned merging. Therefore, in each step, the colors of the last updated row can be used to find the partitioning of qubits mentioned in Lemma \ref{lm_isolated_SPT}, which completes the proof of Proposition 1.

As an example, Fig.~\ref{fig_percole_step_by_step} shows the step by step development of the circuit described in Fig.~\ref{fig_diagramatics_clustering}, in the percolation picture.

It is worth noting that the stabilizer set given in Lemma \ref{lm_isolated_SPT} is already in the clipped gauge (see Supplementary Section  \ref{apx_stabilizer_formalism} for the definition of clipped gauge). Therefore, the entanglement structure can be inferred readily from the percolation picture. In particular, the entanglement entropy $S(x)$ of the region $[1,x]$, is equal to the number of clusters with support on both inside and outside of the region $[1,x]$ on the top row of the percolation lattices. Such a quantity can be computed using the percolation CFT and the coefficient of the logarithm turns out to be:
\begin{equation}\label{eq_ax}
  a_x=\frac{\ln(2)\sqrt{3}}{2\pi}\simeq 0.191
\end{equation}
See for example equation (3) in Ref.\cite{cardy2001conformal} (see also \cite{cardy2000linking,nahum2020entanglement}). The $2\ln(2)$ discrepancy between equation \eqref{eq_ax} and equation (3) of Ref.\cite{cardy2001conformal} is due to the fact that we have two copies of percolation and the logarithm in our definition of entanglement entropy is in base $2$.

\section{Random measurement circuit corresponding to standard percolation}
\label{apx_layered_circuit_model}

Here we present a slightly different random circuit model than the one presented in the main text, which maps to the standard bond percolation on a square lattice. We define this circuit model using only the competing single qubit and stabilizer measurements, without including any unitary dynamics.

The random circuit model has a layered configuration of measurements, as follows. At alternating layers, we measure either only stabilizer operators or only single qubit operators, as shown in Fig. \ref{fig_circuit_layered}. Two consecutive layers of measurements correspond to a time step (here, there is no distinction between updating steps and time steps). In the layer where stabilizers are measured, for each $i$ we measure $g_i$ with probability $p_t$. In the layer where the single qubit operators are measured, for each $i$ we measure $Z_i$ with probability $1-p_t$. Fig.\ref{fig_circuit_layered} shows a typical realization of the circuit. Note that it is possible for two stabilizer measurements in the circuit to overlap with each other, but since the corresponding operators commute with each other, they can still be measured simultaneously.
\begin{figure}
  \includegraphics[width=0.5\columnwidth]{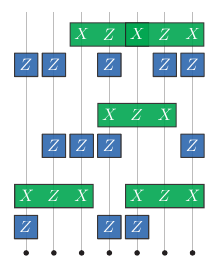}
  \caption{The random circuit which maps to the the standard bond percolation on a square lattice. On the last layer of the figure, two stabilizers which have overlap are measured. Since the stabilizers commute with each other, the order of the measurements does not matter. }
  \label{fig_circuit_layered}
\end{figure}

Note that the percolation mapping described in Supplementary Section  \ref{apx_percolation_map} is based on the specific form of the measurements involved rather than their layout on the circuit. Since the exact same set of measurements are involved in the quantum circuit described here, one can use the same rules, as is explained in the main text, to map this circuit model to two percolation models on square lattices. It is easy to verify each of the corresponding percolation models is indeed the standard bond percolation on an $N\times 2N$ square lattice where each bond is connected with probability $p_t$. We focus on the odd sites' lattice, for example, and include a horizontal bond with probability $p_t$, corresponding to a stabilizer measurement. A vertical bond is included with probability $p_t$, corresponding to the absence of a single qubit measurement.

Fig.~\ref{fig_layered_circuit_plots} is analogous to Fig. 3 in the main text, but for the random circuit model with the layered structure. The phase transition happens at $p_c=1/2$ and the critical exponent is found to be $\nu=4/3$, as is expected from the percolation mapping. We also find $a_x=a_t=0.20(1)$, the same values as the ones in the original circuit model in the main text.

\begin{figure}
  \includegraphics[width=\columnwidth]{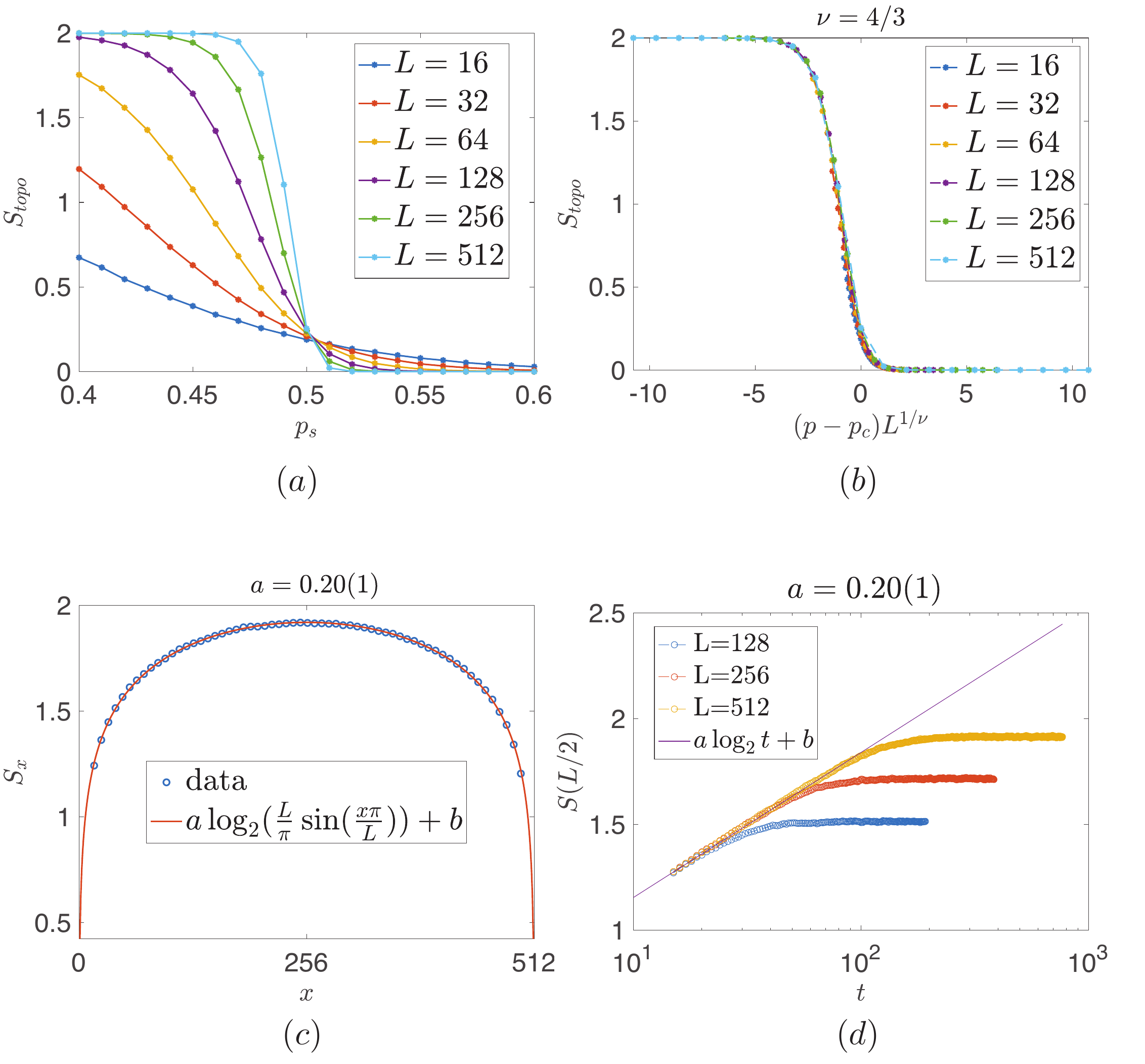}
  \caption{ Entanglement plots for the circuit with layered configuration. (a) Topological entanglement entropy $S_\text{topo}$ versus single qubit measurement probability which shows the same phase transition at $p_c=0.5$. (b) Scaling collapse of the data in (a) for $\nu=4/3$ using the same scaling as in equation (9) in the main text (c) The entanglement entropy of the $[0,x]$ segment of the chain in the steady state at $p=p_c$ and $L=512$. (d) The entanglement entropy of the half-chain versus time for $p_s=p_c$. }
  \label{fig_layered_circuit_plots}
\end{figure}

\section{Approaching the tricritical point}\label{apx_approaching_trc}

Figures \ref{fig_pu0102}a and b show the ancilla entropy $S_a$ measured $t=N$ time steps after it was entangled, versus $p_s$ for a fixed $p_u$ at $p_u=0.2$ and $p_u=0.1$ respectively (see Fig.4b in the main text for $p_u=0.3$ as well). The crossing points, mark the critical points which are shown as well in Fig.1b on the phase boundaries. As can be seen from the figures, there is clearly two separate area law phases with a volume law phase in between.

\begin{figure}
  \includegraphics[width=\columnwidth]{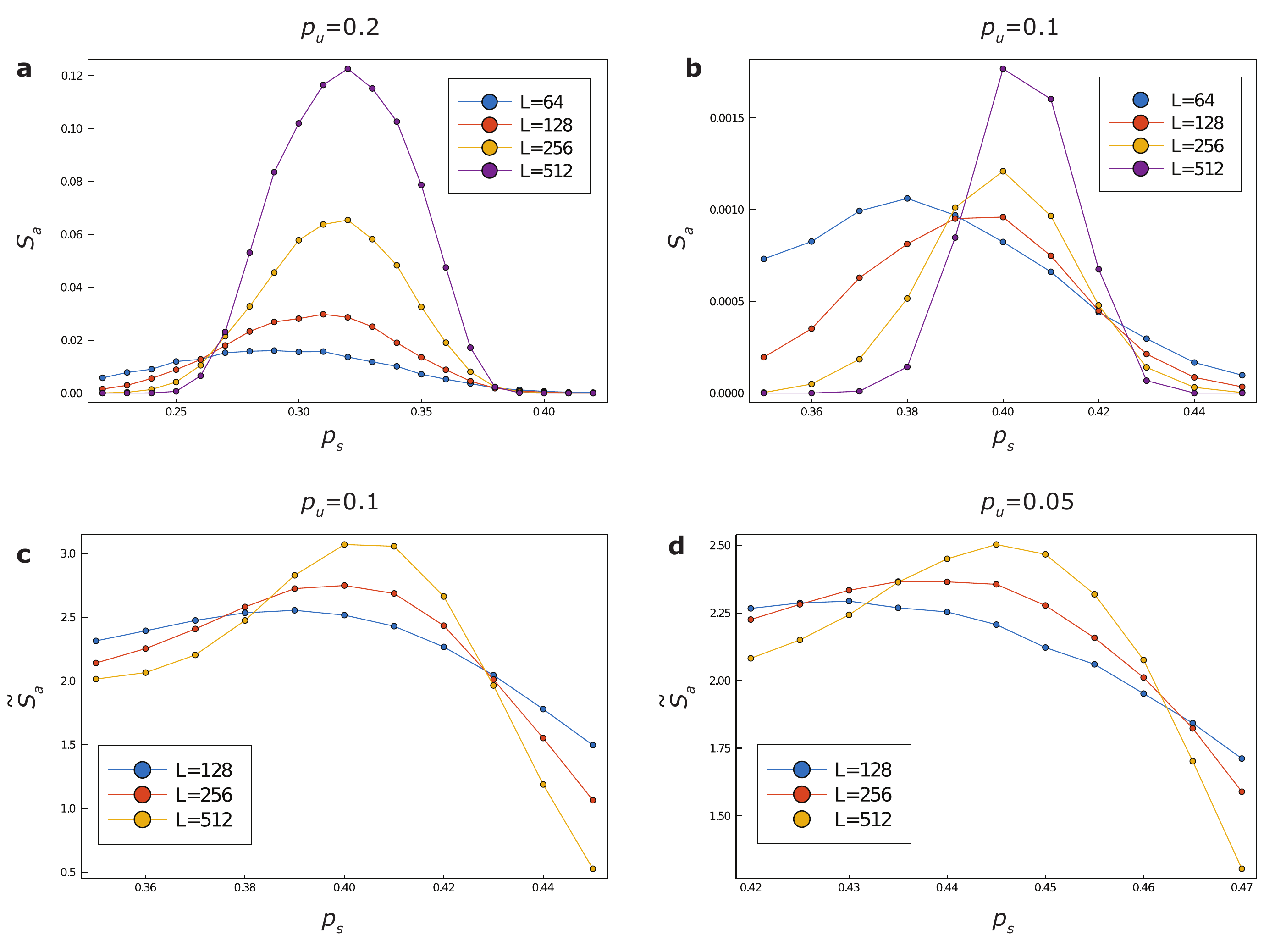}
  \caption{Ancilla order parameters in the vicinity of the tricritical point. \textbf{a} and \textbf{b}, The ancilla entropy $S_a$ versus $p_s$ measured $t=N$ time steps after being entangled, $p_u=0.2$ and $p_u=0.1$ respectively. \textbf{c} and \textbf{d}, The scrambled ancilla entropy $\tilde{S}_a$ versus $p_s$ measured $t=N/2$ time steps after being entangled, $p_u=0.1$ and $p_u=0.05$}
  \label{fig_pu0102}
\end{figure}
One thing to note is that as one probes lower values of $p_u$, the ancilla order parameter gets smaller. This in turns means one had to average over larger numbers of realizations to get get a result with decent noise. To see this, note that since the whole system is always in a stabilizer state, $S_\text{a}$ is either $0$ or $1$. and hence $\overline{S_a^2}=\overline S_a$, where $\overline{S}$ correspond to the ensemble average of $S$. Hence, the relative statistical error of $\overline{S_a}$ would be,
\begin{equation}
  \frac{\Delta \overline{S_a}}{\overline{S_a}}=\frac{\sqrt{(\overline{S_a^2}-\overline{S_a}^2)/N}}{\overline {S_a}}\simeq\frac{1}{\sqrt{N~ \overline{S_a}}}
\end{equation}
where we used $\overline{S_a}\ll 1$ in the last step. For example, the diagram in Fig.\ref{fig_pu0102}b is obtained by averaging over $10^6$ realizations (except $L=512$ size whihch is averaged over $4\times 10^5$ realization due to limited computational resources) while for Fig.\ref{fig_pu0102}a (and Fig.4 in the main text) averaging over $O(10^5)$ realizations results in a quite smooth curve.

On the other hand, evaluating the scrambled ancilla order parameter $\tilde{S}_a$ is more efficient and requires averaging over smaller number of realization to obtain  sufficiently smooth curves, and potentially could be used to prob the vicinity of the tricritical point. However,  it seems to be more sensitive to finite size effects (see also Fig.\ref{fig_pu03_zoomed}), and thus one needs to go to larger system sizes to be able to clearly distinguish different phases. Fig.\ref{fig_pu0102}c and d show the scrambled ancilla entropy measured $t=N/2$ time steps after scrambling at $p_u=0.1$ and $p_u=0.05$ respectively and are obtained via averaging over $10^4$ realizations. While Fig.\ref{fig_pu0102}c confirms the existence of the volume law phase at $p_u=0.1$ which is evident in Fig.\ref{fig_pu0102}b as well, Fig.\ref{fig_pu0102}d suggests that it survives down to $p_u=0.05$ as well but a more detailed study is needed to corroborate this claim.

  \section{Author Contribution}
  All authors contributed equally to this work.

  \section{competing interests}
  The authors declare no competing interests.
 \section{Data availability}
 The data plotted in the figures of this Article that support the findings of this study are available at \href{http://doi.org/10.5281/zenodo.4031884}{http://doi.org/10.5281/zenodo.4031884}
 \section{Code availability}
 The source codes used to run the simulations of the symmetric random quantum circuit studied in this Article are available at \href{http://doi.org/10.5281/zenodo.4031884}{http://doi.org/10.5281/zenodo.4031884}.

\clearpage
\widetext

\section{Supplemental Figures}\label{apx_additional_figs}

\begin{figure}[htbp]
  \includegraphics[width=\textwidth]{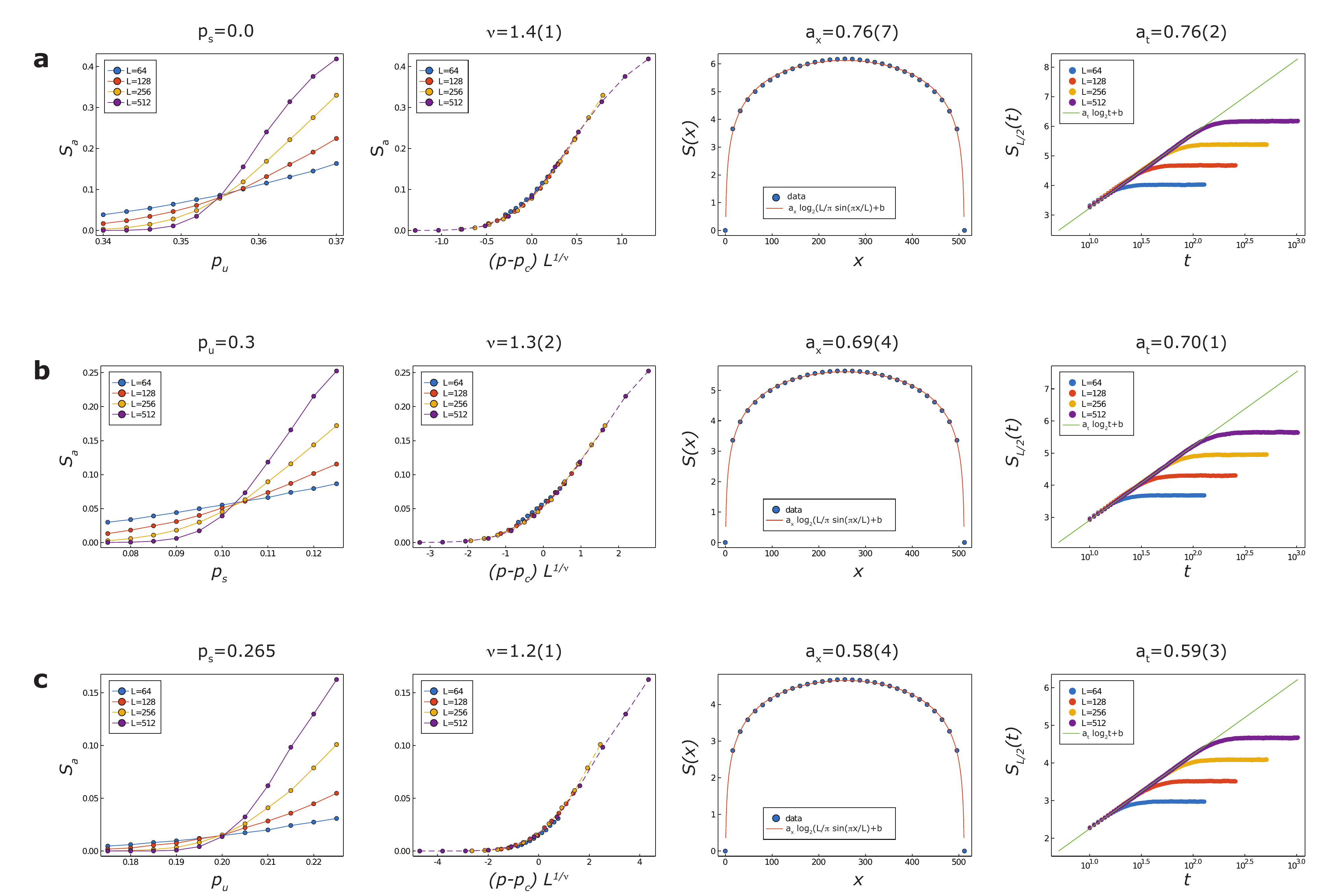}
  \caption{The transition from the SPT phase to the volume law entangled phase. The plots on each row corresponds to the same critical point. The ancilla order parameter $S_a$ is measured $t=N$ time steps after being entangled to the qubit chain. \textbf{a}, Plots corresponding to the critical point on the $p_u$ axis, with $p_s=0$ and $p_u=0.355(3)$.  \textbf{b}, Plots corresponding to the critical point at $p_s=0.102(3)$ and $p_u=0.3$. \textbf{c}, Plots corresponding to the critical point at $p_s=0.265$ and $p_u=0.201(4)$.   }
  \label{fig_SPTVL}
\end{figure}

\begin{figure}[htbp]b
  \includegraphics[width=\textwidth]{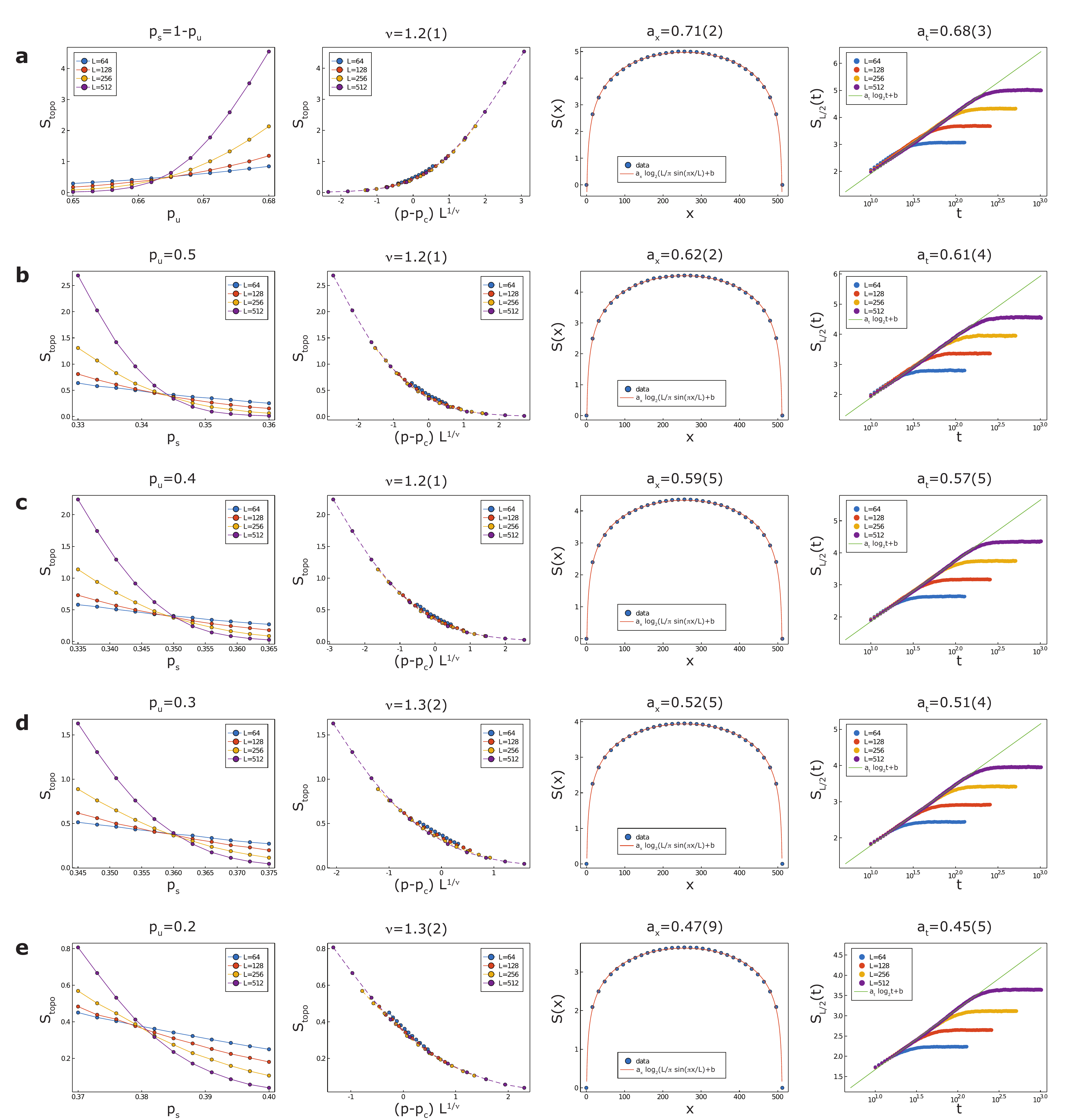}
  \caption{The transition from the volume law entangeld phase to the trivial phase. The plots on each row corresponds to the same critical point. \textbf{a}, Plots corresponding to the critical point on the $p_u+p_s=1$ line, with $p_s=0.337(4)$ and $p_u=0.663(4)$.  \textbf{b}, Plots corresponding to the critical point at $p_s=0.345(4)$ and $p_u=0.5$. \textbf{c}, Plots corresponding to the critical point at $p_s=0.351(3)$ and $p_u=0.4$. \textbf{d}, Plots corresponding to the critical point at $p_s=0.362(6)$ and $p_u=0.3$. \textbf{e}, Plots corresponding to the critical point at $p_s=0.381(4)$ and $p_u=0.2$.}
  \label{fig_VLTR}
\end{figure}

 \begin{figure}[htbp]
   \includegraphics[width=\textwidth]{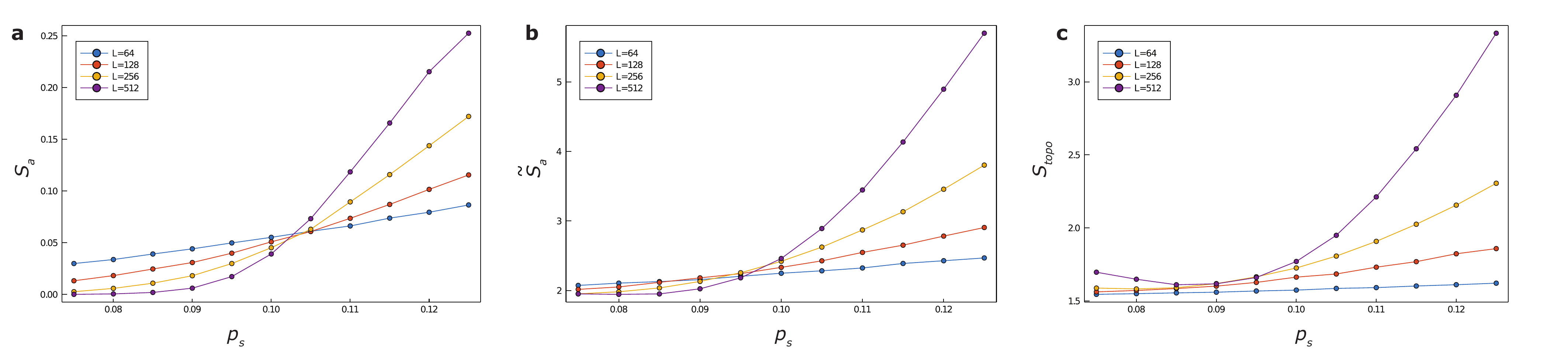}
   \caption{Three different order parameters at the SPT to volume law phase transition on $p_u=0.3$. \textbf{a}, the ancilla entropy $S_a$ measured at $t=N$. \textbf{b}, the scrambled ancilla entropy $\tilde{S}_a$ measured at $t=N/2$. \textbf{c} the topological entanglement entropy $S_\text{topo}$. Note that the ancilla order parameter $S_a$ provides the sharpest prob for the phase transition. }
   \label{fig_pu03_zoomed}
 \end{figure}

\end{document}